\newcommand{\trace}{\mathop{\mathrm{trace}}}
\newcommand{\defeq}{\mathrel{\mathop:}=} 
\newcommand{\eqdef}{=\mathrel{\mathop:}}
\newcommand{\ud}{\mathrm{d}}
\newcommand{\R}{\mathbb{R}}
\renewcommand{\P}{\mathbb{P}}
\newcommand{\spc}[1]{\mathbb{#1}}
\newcommand{\charfun}[1]{\mathbbm{1}_{#1}}
\newcommand{\Cov}{\mathrm{Cov}}
\newcommand{\diag}{\mathop{\mathrm{diag}}}
\newtheorem{theorem}{Theorem}
\newtheorem{proposition}[theorem]{Proposition}
\theoremstyle{definition}
\newtheorem{assumption}[theorem]{Assumption}
\theoremstyle{remark}
\newtheorem{remark}[theorem]{Remark}
\date{\today}
\begin{document}

\title[Robust Adaptive Metropolis]{Robust adaptive
  Metropolis algorithm with coerced acceptance rate}
\author{Matti Vihola}
\address{Matti Vihola, Department of Mathematics and Statistics,
  University of Jyväskylä,
  P.O.Box 35,
  FI-40014 University of Jyväskylä,
  Finland}
\email{matti.vihola@iki.fi}
\urladdr{http://iki.fi/mvihola/}
\thanks{The author was supported 
  by the Finnish Centre of Excellence in Analysis and
  Dynamics Research
  and by the Finnish Graduate School in
  Stochastics and Statistics.}
\subjclass[2010]{Primary
  65C40; 
  Secondary
  60J22, 
  60J05, 
  93E35
}
\keywords{
  Acceptance rate, adaptive Markov chain Monte Carlo,
  ergodicity, Metropolis algorithm, robustness
}

\maketitle

\begin{abstract} 
  The adaptive Metropolis (AM) algorithm of Haario, Saksman and Tamminen
  [Bernoulli 7 (2001) 223-242] uses the estimated covariance of the target
  distribution in the proposal distribution. This paper introduces a new
  robust adaptive Metropolis algorithm estimating the shape of the target
  distribution and simultaneously coercing the acceptance rate.  The
  adaptation rule is computationally simple adding no extra cost compared
  with the AM algorithm. The adaptation strategy can be seen as a
  multidimensional extension of the previously proposed method adapting the
  scale of the proposal distribution in order to attain a given acceptance
  rate. The empirical results show promising behaviour of the new algorithm
  in an example with Student target distribution having no finite second
  moment, where the AM covariance estimate is unstable. In the
  examples with finite second moments, the performance of the new approach
  seems to be competitive with the AM algorithm combined with scale
  adaptation.
\end{abstract} 

\section{Introduction} 
\label{sec:intro} 

Markov chain Monte Carlo (MCMC) is
a general method to approximate integrals of the form
\[
    I \defeq \int_{\R^d} f(x) \pi(x) \ud x < \infty
\]
where $\pi$ is a probability density function, which can be evaluated
point-wise up to a normalising constant. Such an integral occurs
frequently when computing Bayesian posterior expectations 
\cite[e.g.,][]{robert-casella,gilks-mcmc,roberts-rosenthal-general}. 
The MCMC method is based on a
Markov chain $(X_n)_{n\ge 1}$ that is easy to simulate in
practice, and for which the ergodic averages $I_n \defeq n^{-1}\sum_{k=1}^n
f(X_k)$ converge to the integral $I$ as the number of samples $n$ tends to
infinity.

One of the most generally applicable MCMC method is
the random walk Metropolis (RWM) algorithm. Suppose $q$ is a 
symmetric probability density supported on $\R^d$
(for example the standard Gaussian density) and let $S\in\R^{d\times d}$ be 
a non-singular matrix. Set $X_1 \equiv x_1$, where $x_1\in\R^d$ is a 
given starting point in the support; $\pi(x_1)>0$. 
For $n\ge 2$ apply recursively the following two steps:
\begin{enumerate}[(M1)]
\item simulate $Y_n = X_{n-1} + S U_n$, where $U_n\sim q$ is a independent
      random vector, and
      \label{item:metropolis-proposal}
\item with probability $\alpha_n \defeq \alpha(X_{n-1},Y_n) \defeq \min\{1,\pi(Y_n)/\pi(X_{n-1})\}$ 
      the proposal is accepted, and $X_n = Y_n$; otherwise 
      the proposal is rejected 
      and $X_n = X_{n-1}$.
\end{enumerate}
This algorithm will produce a valid chain, that
is, $I_n\to I$ almost surely as $n\to\infty$ \cite[e.g.][Theorem
1]{nummelin-mcmcs}. However, the efficiency of the method, that is, the
speed of the convergence $I_n\to I$, is crucially affected by the choice of
the shape matrix $S$.

Recently, there has been an increasing interest on adaptive MCMC algorithms
that try to learn some properties of the target distribution $\pi$
on-the-fly, and use this information to facilitate more efficient sampling
\citep{saksman-am,andrieu-robert,atchade-rosenthal,andrieu-moulines,%
roberts-rosenthal,roberts-rosenthal-examples}; see also the recent review
by \cite{andrieu-thoms}. In the context of the RWM algorithm, this is typically
implemented by replacing the constant shape $S$ in 
(M\ref{item:metropolis-proposal}) with a random
matrix $S_{n-1}$ that depends on the past (on the random
variables $U_k$, $X_k$, and $Y_k$ for $1\le k\le n-1$). 

Different strategies have been proposed to compute the matrix $S_{n-1}$. The
seminal Adaptive Metropolis (AM) algorithm \citep{saksman-am} uses $S_{n-1} =
\theta L_{n-1}$ where $L_{n-1}$ is the Cholesky factor of the (possibly
modified) empirical covariance matrix 
$C_{n-1}=\Cov(X_1,\ldots,X_{n-1})$. Under certain assumptions, the
empirical covariance converges to the true covariance of the target
distribution $\pi$ \cite[see, e.g.,
][]{saksman-am,andrieu-moulines,saksman-vihola,vihola-collapse}. The
constant scaling parameter $\theta>0$ is a tuning parameter chosen by the
user; the value $\theta = 2.4/\sqrt{d}$ proposed in the original paper is
widely used, as it is asymptotically optimal under certain theoretical
setting \citep{gelman-roberts-gilks}. 

In fact, the theory behind the value $\theta = 2.4/\sqrt{d}$ connects
the mean acceptance rate to the efficiency of the Metropolis algorithm
in more general settings. Therefore, it is sensible to try to find
such a scaling factor $\theta$ that yields a desired mean acceptance rate;
typically 23.4\% in multidimensional
settings~\citep{roberts-gelman-gilks-scaling}. 
The first algorithms coercing the acceptance rate
did not adapt the shape factor at all, but only the scale of the
proposal distribution. That is, $S_{n-1} = \theta_{n-1} I$, a multiple
of a constant matrix, where the factor $\theta_{n-1}\in(0,\infty)$ is
adapted roughly by increasing the value of the acceptance probability
is too low, and vice versa
\citep{atchade-rosenthal,andrieu-thoms,roberts-rosenthal-examples,atchade-fort}.
This adaptive scaling Metropolis (ASM) algorithm has some nice
properties, and it has been shown that the
algorithm is stable under quite a general setting \citep{vihola-asm}. 
It is, however, a
`one-dimensional' scheme, in the sense that it is unable to adapt to
the shape of the target distribution like the AM algorithm.  This can
result in slow mixing with certain target distributions $\pi$ having a
strong correlation structure.

The scale adaptation in the ASM approach has been proposed to be used 
within the AM algorithm~\citep{atchade-fort,andrieu-thoms}. 
This algorithm, which shall be referred here to as the adaptive
scaling within AM (ASWAM), combines the shape adaptation of AM and the acceptance
probability optimisation. Namely, $S_{n-1} = \theta_{n-1} L_{n-1}$, where
$\theta_{n-1}$ is computed from the observed acceptance probabilities
$\alpha_2,\ldots,\alpha_{n-1}$ and $L_{n-1}$ is
the Cholesky factor of $\Cov(X_1,\ldots,X_{n-1})$. This multi-criteria adaptation
framework provides a coerced acceptance probability, and at the same time
captures the covariance shape information of $\pi$. Empirical findings 
indicate this algorithm can overcome some difficulties
encountered with the AM method \citep{andrieu-thoms}.

The present paper introduces a new algorithm alternative to the ASWAM
approach. The aim is to seek a matrix factor $S_*$ that captures the shape 
of $\pi$ and at the same time allows to attain a given mean acceptance rate. Unlike
the multi-criteria adaptation in ASWAM, the new approach is based on a single
matrix update formula that is computationally equivalent to the covariance
factor update in AM. The algorithm, called here the robust adaptive 
Metropolis (RAM), differs from the ASWAM approach by avoiding the use
of the
empirical covariance, which can be problematic in some settings,
especially if $\pi$ has no finite second moment. The proposed approach
is reminiscent, yet not equivalent, with robust pseudo-covariance 
estimation, which has also been
proposed to be used in place of the AM approach \citep{andrieu-thoms}.

The RAM algorithm is described in detail in the next section. Section
\ref{sec:convergence} provides analysis on the stable points of the
adaptation rule, that is, where the sequence of matrices $S_n$ is
supposed to converge. In Section \ref{sec:lln}, the validity of the
algorithm is verified under certain sufficient conditions. It is also 
shown that the adaptation converges
to a shape of an elliptically symmetric target distribution.
The RAM algorithm was empirically tested in
some example settings and compared with the AM and the ASWAM 
approaches. Section \ref{sec:tests} summarises the encouraging findings. 
The final section concludes with some discussion on the approach
as well as directions of further research.


\section{Algorithm}
\label{sec:ashm} 

In what follows, 
suppose that the proposal density $q$ is spherically symmetric:
there exists a function $\hat{q}:\R\to[0,\infty)$ such that
$q(x) = \hat{q}(\|x\|)$ for all $x\in\R^d$.
Let $s_1\in\R^{d\times d}$ be a lower-diagonal matrix with positive diagonal
elements, and suppose 
$\{\eta_n\}_{n\ge 1}\subset(0,1]$ is a step size sequence decaying to zero.
Furthermore, let $x_1\in\R^d$ be some point in the support of the target
distribution, $\pi(x_1)>0$, and let $\alpha_*\in(0,1)$ stand for the 
target mean acceptance probability of the algorithm.

The robust adaptive Metropolis process is defined recursively through
\begin{enumerate}[({R}1)]
\item \label{step:proposal}
  compute $Y_n \defeq X_{n-1} + S_{n-1} U_n$, where $U_n\sim q$ is an 
      independent random vector,
\item \label{step:acc-rej} with probability $\alpha_n \defeq \min\{1,\pi(Y_n)/\pi(X_{n-1})\}$ 
      the proposal is accepted, and $X_n \defeq Y_n$; otherwise 
      the proposal is rejected 
      and $X_n \defeq X_{n-1}$, and
\item \label{step:adapt}
  compute the lower-diagonal matrix $S_n$ with positive diagonal elements
  satisfying the equation
  \begin{equation} 
\!    S_{n} S_{n}^T \!= S_{n-1}
    \left(I + \eta_{n}(\alpha_n\! - \alpha_*)\frac{U_{n}U_{n}^T}{\|U_{n}\|^2}\right)
        S_{n-1}^T
   \label{eq:ram-update}
   \end{equation}
   where $I\in\R^{d\times d}$ stands for the identity matrix.
\end{enumerate}
The steps (R\ref{step:proposal}) and (R\ref{step:acc-rej}) implement one
iteration of the RWM algorithm, but with a random matrix $S_{n-1}$ in
(R\ref{step:proposal}).
In the adaptation step (R\ref{step:adapt})
the unique $S_n$ satisfying \eqref{eq:ram-update} 
always exists, since it is the Cholesky factor of the
matrix in the right hand side, which is verified below to be 
symmetric and positive definite.
\begin{proposition} 
\label{prop:posdef} 
Suppose $S\in\R^{d\times d}$ is a non-singular matrix,
$u\in\R^d$ is a non-zero vector and $a\in(-1,\infty)$
is a scalar. Then, the matrix
$
M\defeq    S \big(I + a \frac{uu^T}{\|u\|^2}\big)S^T
$
is symmetric and positive definite.
\end{proposition}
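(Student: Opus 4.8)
The plan is to show that the inner matrix $N \defeq I + a\frac{uu^T}{\|u\|^2}$ is symmetric and positive definite, and then conclude that $M = SNS^T$ inherits these properties because $S$ is non-singular. Symmetry of $N$ is immediate since $uu^T$ is symmetric and $a/\|u\|^2$ is a scalar; symmetry of $M$ then follows from $M^T = (S N S^T)^T = S N^T S^T = S N S^T = M$.

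For positive definiteness of $N$, I would compute $x^T N x = \|x\|^2 + a\frac{(u^T x)^2}{\|u\|^2}$ for an arbitrary non-zero $x\in\R^d$. Write $v \defeq u/\|u\|$, a unit vector, so that this expression becomes $\|x\|^2 + a(v^T x)^2$. By Cauchy--Schwarz, $(v^T x)^2 \le \|x\|^2$, with the understanding that we only need a bound in the direction that could cause trouble. If $a \ge 0$ the expression is clearly at least $\|x\|^2 > 0$. If $a \in (-1,0)$, then $a(v^T x)^2 \ge a\|x\|^2$ (since $(v^T x)^2 \le \|x\|^2$ and $a$ is negative), hence $x^T N x \ge \|x\|^2 + a\|x\|^2 = (1+a)\|x\|^2 > 0$ because $1+a > 0$. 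Either way $N$ is positive definite. (Alternatively, one can diagonalise $N$ directly: its eigenvalues are $1+a$ on the span of $u$ and $1$ on the orthogonal complement, all strictly positive.)

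Finally, for any non-zero $x\in\R^d$, set $y \defeq S^T x$. Since $S$ is non-singular, so is $S^T$, hence $y \ne 0$. Then $x^T M x = x^T S N S^T x = y^T N y > 0$ by the positive definiteness of $N$ established above. Combined with the symmetry already noted, this shows $M$ is symmetric and positive definite, completing the proof.

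I do not expect any genuine obstacle here; the only point requiring a moment's care is the sign analysis in the case $a \in (-1,0)$, where the hypothesis $a > -1$ is used in an essential way (for $a = -1$ the matrix $N$ is only positive semidefinite, being singular on the direction of $u$), and the use of the non-singularity of $S$ to transfer strict positivity from $N$ to $M$.
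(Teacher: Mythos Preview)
Your proof is correct and follows essentially the same line as the paper's: both split into the cases $a\ge 0$ and $a\in(-1,0)$ and use the Cauchy--Schwarz bound $(v^Tx)^2\le\|x\|^2$ for a unit vector $v$ to handle the negative case. The only cosmetic difference is that the paper computes $x^TMx$ directly in terms of $z=S\tilde u$, whereas you first establish positive definiteness of the inner factor $N$ and then transfer to $M$ via the substitution $y=S^Tx$; the underlying inequality is the same.
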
 
\begin{proof} 
The symmetricity is obvious. Let 
$x\in\R^d\setminus\{0\}$, 
denote $\tilde{u}\defeq \frac{u}{\|u\|}$ and define 
$z \defeq S \tilde{u}$. We may write $M = S S^T + a z z^T$, whence
\[
    x^T M x = \| x^T S\|^2 + a (x^T z)^2
    = \| x^T S\|^2\left(
    1 + a \frac{(x^T z)^2}{\| x^T S\|^2}
    \right).
\]
This already establishes the claim in the case $a\ge 0$.
Suppose then $a\in(-1,0)$.
Clearly $(x^T z)^2 = \|x^T S \tilde{u}\|^2 \le \|x^T S\|^2$ and so
$x^T M x \ge \| x^T S\|^2(1-|a|)>0$. 

\end{proof} 

Let us then see what happens in the adaptation in intuitive terms.
Observe first that in (R\ref{step:proposal}) the proposal $Y_n$ is
formed by adding an increment $W_n\defeq S_{n-1} U_n$ to the previous
point $X_{n-1}$. Since $U_n$ is distributed according to the
spherically symmetric $q$, the random variable $W_n$ is distributed
according to the elliptically symmetric density $q_{S_{n-1}}(w) \defeq
\det(S_{n-1})^{-1}q(S_{n-1}^{-1} w)$ with the main axes defined by the
eigenvectors and the corresponding eigenvalues of the matrix $S_{n-1}
S_{n-1}^T$.

To illustrate the behaviour of the RAM update (R\ref{step:adapt}), 
Figure \ref{fig:ellipsoids} shows 
two examples how the contours of the proposal change in the update.
\begin{figure} 
\includegraphics{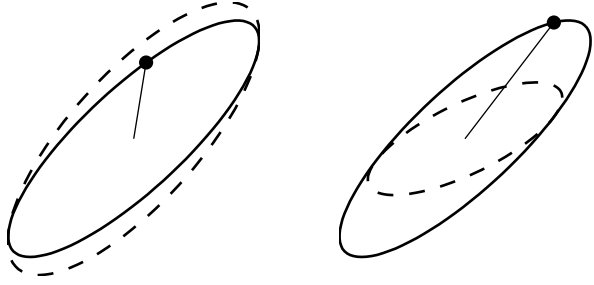}
\caption{Two examples of the RAM update (R\ref{step:adapt}). The solid
  line represents the contour ellipsoid defined by $S_{n-1} S_{n-1}^T$, and the
  vector $S_{n-1} U_n/\|U_n\|$ is drawn as a dot. The contours defined by
  $S_n S_n^T$ are dashed.} 
\label{fig:ellipsoids}
\end{figure}
The example on the left shows how the contour 
ellipsoid expands to the direction 
of $S_n U_n$ when
$\eta_n(\alpha_n-\alpha_*)=0.8>0$. Similarly, the example on the right shows
how the ellipsoid shrinks when $\eta_n(\alpha_n-\alpha_*)=-0.8<0$.
These examples reflect the basic idea behind the approach.
If the acceptance probability is smaller than desired, $\alpha_n<\alpha_*$
(or more than desired, $\alpha_n>\alpha_*$)
the proposal distribution is shrunk (or expanded) 
with respect to the direction of the current proposal increment.

We can also see this behaviour from the update equation
by considering the radius of the contour 
ellipsoid defined by $S_n S_n^T$ with
respect to different directions.
Let $v\in\R^d$ be a unit vector.
As in the proof of Proposition
\ref{prop:posdef}, we may write
\[
    \|S_n^T v\|^2 = \| S_{n-1}^T v\|^2 + \eta_n(\alpha_n-\alpha_*) ( Z_n^T v)^2
\]
where $Z_n = S_n U_n/\|U_n\|$.
If $Z_n$ and $v$ are orthogonal, the latter term
vanishes and $\|S_n^T v\| = \|S_{n-1}^T v\|$. If they are parallel, that is, 
$v = \pm Z_n/\|Z_n\|$, then the factor $(Z_n^T v)^2$ equals
$\|S_{n-1}^T v\|^2$, and so $\|S_n^T v\| = 
\sqrt{1+\eta_n(\alpha_n-\alpha_*)}\|S_{n-1}^T v\|$.
Any other choices of the unit vector $v$ fall in between these two extremes.

\begin{remark} 
    \label{rem:asm} 
    In dimension one, the value of $S_n$ can be computed directly by
\[
    \log S_n = \log S_{n-1} + \frac{1}{2}\log\big(1+\eta_n(\alpha_n -
    \alpha_*)\big).
\]
When $\eta_n$ is small, this is almost equivalent to the update
\[
    \log S_n = \log S_{n-1} + \frac{\eta_n}{2}(\alpha_n - \alpha_*)
\]
implying that the RAM algorithm will exhibit a similar
behaviour with the ASM algorithm as proposed by 
\cite{atchade-fort} and \cite{andrieu-thoms} and analysed by
\cite{vihola-asm}. Therefore, it is justified to consider RAM as
a multidimensional generalisation of the ASM adaptation rule.
\end{remark} 

\begin{remark} 
    In practice, the matrix $S_n$ in (R\ref{step:adapt})
    can be computed as a rank one Cholesky update or downdate of $S_{n-1}$
    when $\alpha_n-\alpha_*>0$ and $\alpha_n-\alpha_*<0$, 
    respectively \citep{linpack-guide}.
Therefore, the algorithm is computationally efficient up to a relatively high 
dimension. In fact, the full $d$-dimensional matrix multiplication 
required when generating the proposal in (R\ref{step:proposal}) has
the same $O(d^2)$ complexity as the Cholesky update or downdate, 
rendering the adaptation to only add a
constant factor to the complexity of the RWM algorithm.
\end{remark} 

\begin{remark} 
While the step size sequence $\eta_n$ can be chosen quite freely, in
practice it is often defined as $\eta_n = n^{-\gamma}$ with
an exponent $\gamma\in(1/2,1]$. The choice $\gamma=1$, which is employed in
the original setting of the AM algorithm \citep{saksman-am} is not advisable
for the RAM algorithm. For simplicity, consider a one-dimensional setting
like in Remark \ref{rem:asm}. Then, if $\eta_n = n^{-1}$ the logarithm of
$S_n$ can increase or decrease only at the speed $\pm\sum_{k=1}^n \eta_k
\approx \log(n)$. Therefore, $S_n$ can grow or shrink only linearly or
at the speed $1/n$, respectively. This renders the adaptation inefficient,
if the initial value $s_1$ differs significantly from the the scale and
shape of $\pi$.
\end{remark} 


\section{Stable points}
\label{sec:convergence} 

The RAM algorithm introduced in the previous section has, under suitable
conditions, a \emph{stable point}, that is, a matrix $S_*\in\R^{d\times d}$,
where the adaptation process $S_n$ should converge as $n$ increases.  Before
considering the convergence, we shall study the stable points of the
algorithm in certain settings.

One can write the update equation
\eqref{eq:ram-update} in the following form
\begin{equation}
    S_n S_n^T = S_{n-1} S_{n-1}^T
    + \eta_n H(S_{n-1},X_{n-1},U_n)
    \label{eq:ashm-rm}
\end{equation}
where
\[
    H(S,x,u)
   =  S \left(
    \min\left\{1,\frac{\pi(x+Su)}{\pi(x)}\right\}
  -\alpha_*\right) \frac{uu^T}{\|u\|^2} S^T.
\]
The recursion \eqref{eq:ashm-rm} implements a so called
Robbins-Monro stochastic approximation algorithm on $(S_n S_n^T)_{n\ge 1}$ 
\citep[e.g.][]{benveniste-metivier-priouret,kushner-yin-sa,borkar-sa}. 
Such an algorithm seeks the root of the
so called mean field $h_{\pi}$ defined as
\begin{equation*}
    h_{\pi}(S) \defeq S
    \int_{\R^d} \int_{\R^d} 
    \left(\min\left\{1,\frac{\pi(x+Su)}{\pi(x)}\right\}    
  -\alpha_*\right) 
  \frac{uu^T}{\|u\|^2}
    q(u) \ud u \pi(x) \ud x S^T.
\end{equation*}
We shall see that under some sufficient conditions, there exists a stable
point, that is, $h_{\pi}(S)=0$.

First, we shall observe a fundamental property of the RAM algorithm;
that it is invariant under affine
transformations.
\begin{theorem} 
    \label{th:process-invariance} 
Let $\pi$ be a probability density and let $(X_n,S_n)_{n\ge 1}$ be the
RAM process (R\ref{step:proposal})--(R\ref{step:adapt}) 
targeting $\pi$ and started from
$(x_1,s_1)$. Suppose $A\in\R^{d\times d}$ is a non-singular matrix,
$b\in\R^d$ and define $\hat{\pi}(x) \defeq |\det(A)|^{-1} \pi(A^{-1} x - b)$.
Let $(\hat{X}_n,\hat{S}_n)_{n\ge 1}$ be the RAM process 
targeting $\hat{\pi}$ and started from $(A x_1+b, A s_1)$. Then,
the processes $(AX_n+b,(AS_n) (AS_n)^T )_{n\ge 1}$ and
$(\hat{X}_n,\hat{S}_n\hat{S}_n^T)_{n\ge 1}$ have identical distributions.
\end{theorem}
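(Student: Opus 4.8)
The plan is to prove the two processes have identical distributions by coupling them through a shared sequence of i.i.d. proposal increments $U_n \sim q$ and a shared sequence of uniform random variables driving the accept/reject decisions, and then showing by induction on $n$ that $\hat{X}_n = A X_n + b$ and $\hat{S}_n \hat{S}_n^T = (A S_n)(A S_n)^T$ hold pathwise under this coupling. Since both processes are Markov chains built from the same randomness via deterministic update maps, establishing the pathwise identity for each fixed $n$ immediately yields equality in distribution for the full trajectories.

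First I would set up the coupling: let $(U_n)_{n\ge 2}$ be i.i.d.\ from $q$ and $(V_n)_{n\ge 2}$ be i.i.d.\ uniform on $(0,1)$, independent of everything, and use $U_n$ in step (R\ref{step:proposal}) for both processes and the rule ``accept iff $V_n \le \alpha_n$'' in step (R\ref{step:acc-rej}) for both. The base case $n=1$ holds by the choice of starting points: $\hat{X}_1 = A x_1 + b$ and $\hat{S}_1 = A s_1$, so $\hat{S}_1 \hat{S}_1^T = (A s_1)(A s_1)^T$. For the inductive step, assume $\hat{X}_{n-1} = A X_{n-1} + b$ and $\hat{S}_{n-1}$ is \emph{a} Cholesky factor satisfying $\hat{S}_{n-1}\hat{S}_{n-1}^T = (A S_{n-1})(A S_{n-1})^T$ — note we cannot expect $\hat{S}_{n-1} = A S_{n-1}$ on the nose, since $A S_{n-1}$ need not be lower triangular; the claim is only about $S_n S_n^T$, and the key point is that the update (R\ref{step:adapt}), the proposal law, and the acceptance probability all depend on $S_{n-1}$ only through $S_{n-1} S_{n-1}^T$ in the relevant sense. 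For the proposal: $\hat{Y}_n = \hat{X}_{n-1} + \hat{S}_{n-1} U_n$; I would compare $\hat{\pi}(\hat{Y}_n)/\hat{\pi}(\hat{X}_{n-1})$ with $\pi(Y_n)/\pi(X_{n-1})$ where $Y_n = X_{n-1} + S_{n-1} U_n$. Here the subtlety is that $\hat{S}_{n-1} U_n$ and $A S_{n-1} U_n$ differ; but writing $\hat{S}_{n-1} = A S_{n-1} Q$ for some orthogonal $Q$ (both are square roots of the same positive definite matrix), and observing that $q$ is spherically symmetric so $Q U_n$ has the same law as $U_n$, one gets that the \emph{pair} $(\hat{Y}_n, \hat{X}_{n-1})$ under the coupling with increment $U_n$ has the same joint law as $(A Y_n + b, A X_{n-1} + b)$ under increment $Q U_n \overset{d}{=} U_n$. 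To make the pathwise coupling clean I would instead choose to drive the hatted process with $\tilde{U}_n := Q_{n-1}^{-1} U_n$ (or simply re-randomize), but the cleanest route is: since $\hat{q}$ is spherically symmetric, the distribution of $\hat{S}_{n-1} U_n$ depends on $\hat{S}_{n-1}$ only through $\hat{S}_{n-1}\hat{S}_{n-1}^T = A S_{n-1} S_{n-1}^T A^T$, which equals the covariance-type matrix of $A S_{n-1} U_n$; hence we may as well assume $\hat{S}_{n-1} = A S_{n-1} Q_{n-1}$ and feed the hatted chain the increment $Q_{n-1} U_n$, which is again $\sim q$.

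Granting this, the affine-change-of-variables identity $\hat{\pi}(A y + b) = |\det A|^{-1}\pi(y)$ gives $\hat{\pi}(\hat{Y}_n)/\hat{\pi}(\hat{X}_{n-1}) = \pi(Y_n)/\pi(X_{n-1})$, so $\hat{\alpha}_n = \alpha_n$; using the shared $V_n$, the hatted proposal is accepted exactly when the original one is, whence $\hat{X}_n = A X_n + b$. For the matrix update, substitute into \eqref{eq:ram-update}: since $\hat{\alpha}_n = \alpha_n$ and the feeding increment for the hatted chain is $Q_{n-1} U_n$ (with $\|Q_{n-1} U_n\| = \|U_n\|$), the right-hand side becomes
\[
\hat{S}_{n-1}\Bigl(I + \eta_n(\alpha_n - \alpha_*)\frac{(Q_{n-1}U_n)(Q_{n-1}U_n)^T}{\|U_n\|^2}\Bigr)\hat{S}_{n-1}^T
= A S_{n-1} Q_{n-1} Q_{n-1}^{-1}\!\Bigl(I + \eta_n(\alpha_n-\alpha_*)\tfrac{U_nU_n^T}{\|U_n\|^2}\Bigr)Q_{n-1}^{-T} Q_{n-1}^T S_{n-1}^T A^T,
\]
wait — more carefully, $Q_{n-1}(I + a\, \hat{w}\hat{w}^T)Q_{n-1}^T$ with $\hat{w} = Q_{n-1}^{-1}w$ equals $I + a\, ww^T$ only if $Q_{n-1}$ is orthogonal, which it is; so the expression collapses to $A S_{n-1}\bigl(I + \eta_n(\alpha_n-\alpha_*)\frac{U_nU_n^T}{\|U_n\|^2}\bigr)S_{n-1}^T A^T = A(S_n S_n^T)A^T = (AS_n)(AS_n)^T$. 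By Proposition \ref{prop:posdef} this matrix is positive definite, so $\hat{S}_n$ (the unique lower-triangular Cholesky factor of it) satisfies $\hat{S}_n\hat{S}_n^T = (AS_n)(AS_n)^T$, and one can again write $\hat{S}_n = A S_n Q_n$ for an orthogonal $Q_n$, completing the induction with the invariant in the correct form.

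The main obstacle is bookkeeping the orthogonal ``slack'' matrices $Q_n$ cleanly: because Cholesky factors are not preserved under left multiplication by $A$, one must track the invariant at the level of $S_n S_n^T$ (equivalently $\hat S_n = A S_n Q_n$ with $Q_n$ orthogonal) rather than claiming $\hat S_n = A S_n$, and one must exploit the spherical symmetry of $q$ at each step to absorb $Q_{n-1}$ into the driving noise without changing its distribution — this is precisely why the hypothesis that $q$ is spherically symmetric is used. Once the coupling and the induction hypothesis are stated in this $S_nS_n^T$ form, every individual step is the routine affine change of variables plus the orthogonal invariance $\|Qu\|=\|u\|$ and $Q(uu^T)Q^T = (Qu)(Qu)^T$; there is no analytic difficulty, only the need to phrase the statement so that the induction actually closes.
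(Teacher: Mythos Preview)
Your approach is exactly the paper's: couple the two chains through shared uniforms, write $\hat{S}_{n-1} = A S_{n-1} Q_{n-1}$ with $Q_{n-1}$ orthogonal (via QR/Cholesky), absorb $Q_{n-1}$ into the driving noise using the spherical symmetry of $q$, and run the induction at the level of $S_n S_n^T$. The one slip is the direction of the orthogonal matrix you apply to $U_n$: if you feed the hatted chain $Q_{n-1} U_n$ you get $\hat{S}_{n-1} Q_{n-1} U_n = A S_{n-1} Q_{n-1}^2 U_n$, which is not $A S_{n-1} U_n$, and the displayed matrix identity does not collapse. The correct choice is the one you wrote first, $\tilde{U}_n \defeq Q_{n-1}^{-1} U_n = Q_{n-1}^T U_n$ (this is precisely the paper's $\hat{U}_n$); with that, $\hat{S}_{n-1}\tilde{U}_n = A S_{n-1} U_n$ gives $\hat{Y}_n = A Y_n + b$ and hence $\hat{\alpha}_n=\alpha_n$, and the update becomes $A S_{n-1} Q_{n-1}\bigl(I + a\, Q_{n-1}^T U_n U_n^T Q_{n-1}\bigr)Q_{n-1}^T S_{n-1}^T A^T = A S_{n-1}(I + a\,U_n U_n^T)S_{n-1}^T A^T$ as desired. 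With that correction your induction closes exactly as in the paper.
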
 
\begin{proof} 
Let $U_n\sim q$ and $W_n\sim U(0,1)$ be the independent sequences 
that drive the RAM process $(X_n,S_n)_{n\ge 1}$ targeting $\pi$; that is
\begin{eqnarray}
    Y_{n} &=& X_{n-1} + S_{n-1} U_{n} \label{eq:y-rec-invar} \\
    X_{n} &=& Y_{n} \charfun{\{W_{n}\le \alpha_n\}}
                + X_n   \charfun{\{W_{n}> \alpha_n\}}
                \label{eq:x-rec-invar}.
\end{eqnarray}
The proof proceeds by constructing an independent sequence 
$\hat{U}_n\sim q$, so that the RAM process 
$(\tilde{X}_n,\tilde{S}_n)_{n\ge 1}$ targeting $\tilde{\pi}$
and driven by $(\tilde{U}_n)_{n\ge 1}$ and $(W_n)_{n\ge 1}$
will satisfy the claim path-wise:
$A X_n = \hat{X}_n$ and $A S_n (A S_n)^T = \hat{S}_n \hat{S}_n^T$
for all $n\ge 1$.

Write the QR decomposition $(A S_n)^T = Q_n R_n$ where $Q_n$ is orthogonal
and where $\hat{S}_n\defeq R_n^T$ is lower-diagonal and chosen so that it has
a positive diagonal.
We observe that $A S_n (A S_n)^T = \hat{S}_n \hat{S}_n^T$
and defining $\hat{U}_{n+1}
\defeq Q_n^T U_{n+1}$ we have also $A S_n U_{n+1} = \hat{S}_n \hat{U}_{n+1}$.
Since the distribution of $U_{n+1}$ is spherically symmetric and $U_{n+1}$ is 
independent of $Q_n$, the sequence $(\tilde{U}_n)_{n\ge 1}$ is i.i.d.~with
distribution $q$.

Now, we may verify inductively
using \eqref{eq:y-rec-invar} and
\eqref{eq:x-rec-invar} that $\hat{X}_n=A X_n$ can be computed
through
\begin{eqnarray*}
    \hat{Y}_{n} &=& \hat{X}_{n-1} + \hat{S}_{n-1} \hat{U}_{n} \\
    \hat{X}_{n} &=& \hat{Y}_{n} \charfun{\{W_{n}\le \hat{\alpha}_n\}}
                + X_{n-1}   \charfun{\{W_{n}> \hat{\alpha}_n\}}
\end{eqnarray*}
where 
\[
    \hat{\alpha}_n
    = \min\bigg\{1,\frac{\hat{\pi}(\hat{Y}_{n})
       }{\hat{\pi}(\hat{X}_{n-1})}\bigg\} 
    \!= \min\left\{1,\frac{\pi(Y_n)}{\pi(X_{n-1})}\right\}
    \!= \alpha_n. \qedhere
\]
\end{proof} 
After Theorem \ref{th:process-invariance}, it is no surprise that the mean
field of the algorithm satisfies similar invariance properties.
\begin{theorem} 
    \label{th:invariance} 
Suppose $\pi$ is a probability density.
\begin{enumerate}[(i)]
   \item \label{item:affine}
     Let $\hat{\pi}$ be an affine transformation of $\pi$, that
is, $\hat{\pi}(x) = |\det(A)|^{-1} \pi(A^{-1}x-b)$ for some
non-singular matrix $A\in\R^{d\times d}$ and $b\in\R^d$. 
Then, $A h_{\pi}(S) A^T = h_{\hat{\pi}}(AS)$ for all $S\in \R^{d\times d}$.
   \item \label{item:rotation}
     For any orthogonal matrix $Q\in\R^{d\times d}$ and for all
     $S\in\R^{d\times d}$,
     $h_{\pi}(S) = h_{\pi}(SQ)$.
   \item \label{item:uniq-invar}
     Suppose that $S$ is a unique lower-diagonal matrix with positive diagonal
   satisfying $h_{\pi}(S)=0$. Then, restricted to such matrices, the solution of
   $h_{\hat{\pi}}(\hat{S})=0$ is also unique, and of the form
   $\hat{S}=AS Q$ for some orthogonal $Q\in\R^{d\times d}$.
\end{enumerate}
\end{theorem}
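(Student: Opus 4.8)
The plan is to prove the three parts in order, deriving each from the previous one where possible. Part (\ref{item:affine}) is the computational core: I would start from the definition of $h_{\hat\pi}(AS)$, substitute $\hat\pi(y) = |\det A|^{-1}\pi(A^{-1}y - b)$ into the integrand, and then perform the change of variables $x = A^{-1}y - b$ (so $y = A(x+b)$, with Jacobian $|\det A|$) in the outer integral. The key observation is that the acceptance ratio is invariant: $\hat\pi\big((AS)u + y\big)/\hat\pi(y) = \pi\big(A^{-1}(AS)u + A^{-1}y - b\big)/\pi(A^{-1}y - b) = \pi(Su + x)/\pi(x)$, exactly as in the proof of Theorem \ref{th:process-invariance}. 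The $uu^T/\|u\|^2$ factor and the $q(u)\,\ud u$ measure are untouched by this substitution since they involve only $u$. After the substitution one is left with $(AS)\big[\int\int(\min\{1,\pi(Su+x)/\pi(x)\} - \alpha_*)\frac{uu^T}{\|u\|^2}q(u)\,\ud u\,\pi(x)\,\ud x\big](AS)^T = A\,h_\pi(S)\,A^T$, which is the claim.

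For part (\ref{item:rotation}), I would substitute $SQ$ for $S$ in the definition of $h_\pi$ and change variables $u \mapsto Q^{-1}u = Q^Tu$ in the inner integral. Since $q$ is spherically symmetric, $q(Q^Tu) = q(u)$ and the Lebesgue measure is invariant under the orthogonal transformation; moreover $\|Q^Tu\| = \|u\|$ and the acceptance ratio becomes $\pi(x + SQ\,Q^Tu)/\pi(x) = \pi(x+Su)/\pi(x)$. The only subtlety is the matrix factor: the inner integrand contributes $(SQ)\frac{(Q^Tu)(Q^Tu)^T}{\|u\|^2}(SQ)^T = S Q Q^T \frac{uu^T}{\|u\|^2} Q Q^T S^T = S\frac{uu^T}{\|u\|^2}S^T$, so everything collapses back to $h_\pi(S)$. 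This part is essentially bookkeeping.

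Part (\ref{item:uniq-invar}) is deduced from the first two. Given the unique lower-diagonal $S$ with positive diagonal solving $h_\pi(S) = 0$, part (\ref{item:affine}) gives $h_{\hat\pi}(AS) = A\,h_\pi(S)\,A^T = 0$, so $AS$ is \emph{a} root of $h_{\hat\pi}$; writing the QR decomposition $(AS)^T = \tilde Q\tilde R$ with $\hat S := \tilde R^T$ lower-diagonal with positive diagonal and $Q := \tilde Q^T$ orthogonal, we have $AS = \hat S Q^T$, i.e. $\hat S = AS Q$, and $h_{\hat\pi}(\hat S) = h_{\hat\pi}(AS Q^T \cdot Q) = h_{\hat\pi}(AS) = 0$ by part (\ref{item:rotation}) — wait, I need to be careful with the direction: $\hat S Q = AS Q^T Q = AS$ only if I set things up as $\hat S = AS Q^{-1}$; I would simply choose the orthogonal matrix in the QR step so that $\hat S = ASQ$ holds with $\hat S$ lower-diagonal positive-diagonal, which is always possible. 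For uniqueness: suppose $\hat S'$ is another lower-diagonal positive-diagonal root of $h_{\hat\pi}$. Then by part (\ref{item:affine}) applied with $A^{-1}, -A^{-1}b$ (the inverse transformation), $h_\pi(A^{-1}\hat S') = A^{-1}h_{\hat\pi}(\hat S')(A^{-1})^T = 0$, and by part (\ref{item:rotation}) this persists if we multiply $A^{-1}\hat S'$ on the right by any orthogonal matrix; taking the QR decomposition of $(A^{-1}\hat S')^T$ produces a lower-diagonal positive-diagonal root of $h_\pi$, which by hypothesis must equal $S$, forcing $A^{-1}\hat S' = SQ'$ for some orthogonal $Q'$, hence $\hat S' = ASQ'$; since $\hat S'$ and $\hat S$ are both lower-diagonal with positive diagonal and differ by right-multiplication by an orthogonal matrix ($\hat S' = \hat S Q^{-1}Q'$), and such a representation is unique, $\hat S' = \hat S$.

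The main obstacle is largely cosmetic rather than conceptual: keeping the change-of-variables Jacobians and the direction of the orthogonal conjugation straight, and being careful that "restricted to lower-diagonal matrices with positive diagonal, QR gives a unique representative of each orbit under right-multiplication by orthogonal matrices" — this last fact (which underlies both the existence of the claimed form $\hat S = ASQ$ and the uniqueness argument) should be stated explicitly, as it is the only non-routine ingredient and it is what makes part (\ref{item:uniq-invar}) more than a formal consequence of (\ref{item:affine}) and (\ref{item:rotation}).
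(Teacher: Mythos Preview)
Your proposal is correct and follows essentially the same route as the paper: the change of variable $x=A^{-1}z-b$ for (\ref{item:affine}), the change $u=Qv$ together with spherical symmetry of $q$ for (\ref{item:rotation}), and the QR decomposition of $(AS)^T$ for (\ref{item:uniq-invar}). Your treatment of (\ref{item:uniq-invar}) is in fact more careful than the paper's, which dispatches uniqueness in a single sentence; your explicit use of the inverse affine transformation and the uniqueness of the Cholesky/QR representative is exactly what is needed to fill that in.
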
 
\begin{proof} 
The claim (\ref{item:affine}) follows by a change of variable $x = A^{-1}z -b $, 
\[
  \begin{split}
    h_{\pi}(S) 
    &= S
    \int_{\R^d} \int_{\R^d} 
    \left(\min\left\{1,\frac{\pi(x+Su)}{\pi(x)}\right\}
  -\alpha_*\right) 
  \pi(x) \ud x \frac{uu^T}{\|u\|^2}
    q(u) \ud u S^T \\
    &= S
    \int_{\R^d} \int_{\R^d} 
    \left(\min\left\{1,\frac{\hat{\pi}(z + ASu)}{\hat{\pi}(z)}\right\}
  -\alpha_*\right) \hat{\pi}(z) \ud z \frac{uu^T}{\|u\|^2}
    q(u) \ud u S^T \\
    &= A^{-1}h_{\hat{\pi}}(AS)A^{-T}.
\end{split}
\]
The claim (\ref{item:rotation}) follows from similarly, by a change of 
variable $u=Qv$
and due to the spherical symmetry of $q$.
The uniqueness up to rotations, that is, only the matrices of the form
$\hat{S}=ASQ$ satisfy $h_{\hat{\pi}}(\hat{S})=0$ follows directly as above.
The claim (\ref{item:uniq-invar}) is completed by writing the
QR-decomposition $(AS)^T=QR$. and by observing that the upper-triangular 
$R$ can be chosen to have positive diagonal elements.
\end{proof} 

Theorem \ref{th:invariance} verifies that the stable points of the algorithm
are affinely invariant like the covariance (or more generally robust
pseudo-covariance) matrices \citep{huber}. Theorem
\ref{th:fixed-point-elliptic} below verifies that in the case of a suitable
elliptically symmetric target distribution $\pi$, the stable points of the
RAM algorithm in fact coincide with the (pseudo-)covariance of $\pi$. This
is an interesting connection, but in general the fixed points of the RAM
algorithm are not expected to coincide with the pseudo-covariance.

\begin{theorem} 
    \label{th:fixed-point-elliptic} 
Assume $\alpha_*\in(0,1)$ and 
$\pi$ is elliptically symmetric, that is, $\pi(x) \equiv
\det(\Sigma)^{-1} p(\|\Sigma^{-1} x\|)$ for some $p:[0,\infty)\to[0,\infty)$
and for some symmetric and positive definite $\Sigma\in\R^{d\times d}$.
Then,
\begin{enumerate}[(i)]
    \item \label{item:existence} 
      there exists a lower-diagonal matrix with positive diagonal
$S_*\in\R^{d\times d}$ such that $h_{\pi}(S_*)=0$ and such that $S_*S_*^T$
is proportional to
$\Sigma^2$.
\item \label{item:uniqueness} 
  assuming the function $p$ is non-increasing, the solution $S_*$
  is additionally unique.
\end{enumerate}
\end{theorem}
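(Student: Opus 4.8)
The plan is to use the affine invariance of the mean field (Theorem~\ref{th:invariance}) to reduce the whole statement to a spherically symmetric target. Set $\pi_0(x)\defeq p(\|x\|)$; the substitution $y=\Sigma^{-1}x$ shows that $\int\pi_0=1$, so $\pi_0$ is a probability density, and that $\pi(x)=\det(\Sigma)^{-1}\pi_0(\Sigma^{-1}x)$ is its affine image under $A=\Sigma$, $b=0$. Hence Theorem~\ref{th:invariance}(\ref{item:affine}) gives $h_{\pi}(\Sigma S)=\Sigma\,h_{\pi_0}(S)\,\Sigma$ (using $\Sigma^T=\Sigma$), so $h_{\pi}(S)=0$ iff $h_{\pi_0}(\Sigma^{-1}S)=0$, and $SS^T$ is proportional to $\Sigma^2$ iff $(\Sigma^{-1}S)(\Sigma^{-1}S)^T$ is proportional to $I$. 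It therefore suffices to work with $h_{\pi_0}$.

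For part (\ref{item:existence}) I would first note that, for $S=cI$ with $c>0$, the spherical symmetry of $q$ and of $\pi_0$ forces the matrix integral in $h_{\pi_0}(cI)$ to commute with every orthogonal matrix, hence to be a scalar multiple of $I$; taking the trace and using $\trace\bigl(uu^T/\|u\|^2\bigr)=1$ yields
\[
  h_{\pi_0}(cI)=\frac{c^2}{d}\bigl(\bar\alpha(c)-\alpha_*\bigr)\,I,\qquad
  \bar\alpha(c)\defeq\int_{\R^d}\!\!\int_{\R^d}\min\Bigl\{1,\frac{\pi_0(x+cu)}{\pi_0(x)}\Bigr\}q(u)\,\ud u\;\pi_0(x)\,\ud x,
\]
the mean acceptance rate of RWM with increment $cU$. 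Writing $\bar\alpha(c)=\int_{\R^d} g(cu)\,q(u)\,\ud u$ with $g(t)\defeq\int\min\{\pi_0(x),\pi_0(x+t)\}\,\ud x$, one checks that $g$ is continuous ($1$-Lipschitz in $t$ against the $L^1$ modulus of continuity of $\pi_0$), that $0\le g\le g(0)=1$, and that $g(t)\to0$ as $\|t\|\to\infty$ (the mass of $\pi_0$ and that of its far translate eventually separate). Dominated convergence then shows $\bar\alpha$ is continuous with $\bar\alpha(0^+)=1$ and $\bar\alpha(\infty)=0$, so the intermediate value theorem gives $c_*\in(0,\infty)$ with $\bar\alpha(c_*)=\alpha_*$, i.e.\ $h_{\pi_0}(c_*I)=0$. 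Finally, let $L$ be the lower-diagonal, positive-diagonal Cholesky factor of $c_*^2\Sigma^2$; since $LL^T=(c_*\Sigma)(c_*\Sigma)^T$ we may write $L=c_*\Sigma Q$ with $Q$ orthogonal, and then Theorem~\ref{th:invariance}(\ref{item:rotation}) together with the displayed relation gives $h_{\pi}(L)=h_{\pi}(c_*\Sigma)=\Sigma\,h_{\pi_0}(c_*I)\,\Sigma=0$ while $LL^T=c_*^2\Sigma^2$; so $S_*=L$ works.

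For part (\ref{item:uniqueness}), suppose $h_{\pi}(S)=0$; by uniqueness of the Cholesky factorisation it is enough to show $SS^T=c_*^2\Sigma^2$, equivalently (by the reduction) that $h_{\pi_0}(T)=0$ forces $TT^T=c_*^2I$. Taking a singular value decomposition $T=UDV^T$, using Theorem~\ref{th:invariance}(\ref{item:rotation}) to strip $V^T$ and Theorem~\ref{th:invariance}(\ref{item:affine}) with an orthogonal matrix (under which $\pi_0$ is invariant) to strip $U$, one reduces to $h_{\pi_0}(D)=0$ for a diagonal $D=\diag(d_1,\dots,d_d)$ with $d_i>0$; and $TT^T=UD^2U^T$ is proportional to $I$ iff all $d_i$ are equal. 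Here is where $p$ non-increasing enters: after $x\mapsto x-Du/2$ one gets $\int\min\{p(\|x\|),p(\|x+Du\|)\}\,\ud x=\tilde G(\|Du\|)$, where $\tilde G(r)\defeq\int p\bigl(\rho_x(r)\bigr)\,\ud x$ with $\rho_x(r)\defeq\sqrt{\|x\|^2+r^2/4+r|\langle x,e\rangle|}$ (any unit $e$); since $\rho_x(r)$ is increasing in $r$, $\tilde G$ is non-increasing, and $\tilde G(0)=1$. Consequently the $(j,j)$ entry of the matrix integral in $h_{\pi_0}(D)$ is $F_j(D)-\alpha_*/d$ with $F_j(D)=\E\bigl[U_j^2\|U\|^{-2}\,\tilde G(\|DU\|)\bigr]$, while the off-diagonal entries vanish by the sign symmetry $U_k\mapsto-U_k$ of $q$; hence $h_{\pi_0}(D)=0$ forces $F_1(D)=\dots=F_d(D)$. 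For any $k,l$, averaging $F_k(D)-F_l(D)$ with its image under the transposition $\tau$ of coordinates $k,l$ (which leaves $q$ invariant) gives
\[
  F_k(D)-F_l(D)=\frac12\,\E\!\left[\frac{(U_k^2-U_l^2)\bigl(\tilde G(\|DU\|)-\tilde G(\|D\tau U\|)\bigr)}{\|U\|^2}\right],
\]
and because $\|DU\|^2-\|D\tau U\|^2=(d_k^2-d_l^2)(U_k^2-U_l^2)$ with $\tilde G$ non-increasing, the integrand is pointwise non-negative; so $F_k(D)\ge F_l(D)$ whenever $d_k\le d_l$. Moreover a solution has $F_j(D)=\alpha_*/d>0$, so $\tilde G(\|DU\|)>0$ on a set of positive $q$-measure, which (off the null set $\{U_k^2=U_l^2\}$) supplies a set on which the integrand above is \emph{strictly} positive once $\tilde G$ is known to be strictly decreasing on $\{\tilde G>0\}$. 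Hence $F_k(D)=F_l(D)$ forces $d_k=d_l$; applying this to all pairs gives $d_1=\dots=d_d$, and $h_{\pi_0}(d_1I)=0$ together with the (similarly obtained) strict monotonicity of $\bar\alpha$ at $c_*$ forces $d_1=c_*$, i.e.\ $TT^T=c_*^2I$.

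The main obstacle is promoting $F_k(D)\ge F_l(D)$ to the equality case $d_k=d_l$: this needs $\tilde G$ to be \emph{strictly} decreasing on $\{\tilde G>0\}$ (and likewise $\bar\alpha$ strictly decreasing at $c_*$), starting only from ``$p$ non-increasing'' and ``$\pi_0$ integrable'', with no continuity of $p$ available. The argument I have in mind is that if $\tilde G$ were constant with a positive value on an interval $[a,b]$, then, since $\rho_x(r)$ is strictly increasing in $r$, one gets $p$ constant on $[\rho_x(a),\rho_x(b)]$ for a.e.\ $x$; letting $\|x\|\to\infty$ this forces $p$ constant on a half-line, contradicting $\int p(\|x\|)\,\ud x<\infty$ unless that constant is $0$, in which case $\pi_0$ has bounded support and $\tilde G$ already vanishes beyond the radius in question. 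Carrying out this degeneracy analysis cleanly, along with the low-regularity bookkeeping in the dominated-convergence steps for $g$ and $\bar\alpha$, is the only real work; the invariance reductions and the symmetry computations are routine.
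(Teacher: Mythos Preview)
Your reduction to the spherically symmetric case and your existence argument are essentially the paper's: both compute $h_{\pi_0}(\theta I)$ as a scalar multiple of $I$, identify that scalar with a mean acceptance rate, and invoke the intermediate value theorem after checking the limits $1$ and $0$. The paper packages the regularity of the acceptance-rate function into a separate proposition, but the content is the same as your dominated-convergence sketch.

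For uniqueness your route genuinely differs. After the common reduction to a diagonal $D$, the paper does \emph{not} compare coordinates pairwise. Instead it takes linear combinations of the diagonal equations $\int_{\mathcal S^d}\bar g(\|Dv\|)\,v_i^2\,\mu(\ud v)=0$ (where $\bar g(t)=g(t)-\alpha_*$): choosing weights $\lambda_i=-d_i^2$ and adding the constant $\theta_*^2$ times the trace identity yields
\[
  \int_{\mathcal S^d}\bar g\bigl(\|Dv\|\bigr)\,\bigl(\theta_*^2-\|Dv\|^2\bigr)\,\mu(\ud v)=0,
\]
and since $\bar g(t)$ and $\theta_*^2-t^2$ have the same sign (by the strict monotonicity of $g$ until it hits~$0$), the integrand is pointwise nonnegative; it can vanish identically only if $\|Dv\|\equiv\theta_*$, i.e.\ all $d_i=\theta_*$. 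This is a one-line positivity argument once the identity is in hand. Your transposition argument, showing that $d_k\le d_l$ forces $F_k(D)\ge F_l(D)$ with strict inequality when $d_k\neq d_l$, is correct and conceptually natural, but it trades that single identity for a rearrangement-type inequality plus a separate step to identify the common value with $c_*$.

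Both approaches hinge on the same analytic fact---strict monotonicity of the directional acceptance function on $\{>0\}$---and here the paper's argument is tidier than yours. It writes $g(\theta)=\int_0^\infty\bigl(1-\int_{A_{r\theta v}}\pi\bigr)\hat q(r)\,\ud r$ where $A_{r\theta v}=\{x:|\langle x,v\rangle|<r\theta/2\}$ is a slab through the origin of width $r\theta$; since the slab widens with $\theta$, the inner integral is strictly increasing until the slab swallows the support of $\pi$, at which point the term is already~$0$. This geometric picture avoids the ``$p$ constant on overlapping intervals $[\rho_x(a),\rho_x(b)]$ covering a half-line'' argument you outline, which is correct in spirit but, as you note, requires some care with the a.e.\ qualifier in the absence of continuity of $p$. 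If you want to keep your pairwise proof, borrowing the slab argument for strict monotonicity would remove the only real technical overhead.
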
 
\begin{proof} 
In light of Theorem \ref{th:invariance}, it is sufficient to consider
any spherically symmetric $\pi$, that is, the case $\Sigma$ is an identity
matrix.

Let $S$ be a lower-diagonal matrix with positive diagonal.
Observe that since $S$ is non-singular, 
$h_{\pi}(S)=0$ is equivalent to $S^{-1} h_{\pi}(S)
S^{-T}=0$, that is
\begin{equation}
    \int_{\R^d} \int_{\R^d} 
    \left(\min\left\{1,\frac{\pi(x+Su)}{\pi(x)}\right\}
  -\alpha_*\right) 
  \frac{uu^T}{\|u\|^2}
    q(u) \ud u \pi(x) \ud x = 0.
    \label{eq:stable-point}
\end{equation}
Define the function
\begin{equation*}
    \bar{h}(S) \defeq 
    \int_{\R^d} \int_{\R^d} 
    \left(\min\left\{1,\frac{\pi(x+Su)}{\pi(x)}\right\} \right) 
     \frac{uu^T}{\|u\|^2}
    q(u) \ud u \pi(x) \ud x.
\end{equation*}
It is easy to see by symmetry and taking traces 
that \eqref{eq:stable-point} is equivalent to
$\bar{h}(S)=\frac{\alpha_*}{d} I$, where $I\in\R^{d\times d}$ stands for the
identity matrix.

We can write $\bar{h}(S)$ in a more convenient form by
using the polar coordinate representation $u = rv$, where 
$v\in \mathcal{S}^d \defeq \{v\in\R^d: \|v\|=1 \}$ is a unit vector in 
the unit sphere, and $r=\|u\|$ is the length of $u$. Then, by 
Fubini's theorem 
\begin{equation*}
    \bar{h}(S) =  \int_{\mathcal{S}^d} 
    \left[
    \int_0^\infty 
    \int_{\R^d}
    \min\left\{\pi(x),\pi(x+rSv)\right\} \ud x
     \hat{q}(r) \ud r \right]
    vv^T \mu(\ud v)
\end{equation*}
where $\mu$ stands for the uniform distribution 
on the unit sphere $\mathcal{S}^d$
and the proposal is written as $q(u)\propto \hat{q}(\|u\|)$.

By applying the representation of $\pi$ by the radial function $p$
one can write the term above in brackets as
\[
    g(\|Sv\|) \defeq
    \int_0^\infty 
    \int_{\R^d}
    \min\left\{p(\|x\|),p(\|x+r S v\|)\right\} \ud x
   \hat{q}(r) \ud r,
\]
since due to symmetry, 
the value of the integral depends only on the norm $\|Sv\|$.

For any $\theta\in\R_+$, one can now write
\[
    \bar{h}(\theta I)
    = \int_{\mathcal{S}^d} g(\theta) v v^T \mu(\ud v)
    = \frac{g(\theta)}{d}I,
\]
since $\trace\big(\bar{h}(\theta I)\big) = g(\theta)$ and by symmetry.
Proposition \ref{prop:convolutions} in Appendix \ref{app:convolutions}
shows that $g:(0,\infty)\to(0,\infty)$ is continuous, 
that $\lim_{\theta\to\infty}g(\theta) = 0$ 
and that $\lim_{\theta\to 0+} g(\theta)= \int_0^\infty \hat{q}(r)\ud r=1$. 
Therefore, there exists a $\theta_*>0$ such that $g(\theta_*)=\alpha_*$ so 
that  $\bar{h}(\theta_* I)=\frac{\alpha_*}{d} I$, establishing (\ref{item:existence}).

For (\ref{item:uniqueness}), let us first show that $g$ is in this case 
strictly decreasing, at least before hitting zero.
Observe that since $p$ is non-increasing, one can write
\[\begin{split}
    g(\theta) &= \int_0^\infty
     \bigg( \int_{\|x\|>\|x+r\theta v\|}
     p(\|x\|) \ud x 
+ \int_{\|x\|\le \|x+r\theta v\|} p(\|x+r\theta v\|) \ud x 
\bigg) \tilde{q}(r)\ud r \\
&= \int_0^\infty
     \bigg( 1 - 
     \int_{A_{r\theta v}}
     \pi(x) \ud x
\bigg) \tilde{q}(r)\ud r. 
\end{split}\]
It is easy to see that the width of the strip $A_{r\theta v}\defeq 
\{\|x\|\le \|x+r\theta v\|\} \cap 
\{\|x\|<\|x-r\theta v\|\}$ is increasing with 
respect to $\theta$. Therefore, for any fixed $r$ and $v$, 
the term $b_{rv}(\theta)\defeq 1-\int_{A_{r\theta v}}
  \pi(x) \ud x$ is strictly decreasing with respect to $\theta$ 
  as long as the support of $\pi$ is not completely covered by
  $A_{r\theta v}$, in which case $b_{rv}(\theta)=0$.
This implies that $g(\theta)$ is strictly decreasing with
respect to $\theta$, until possibly $g(\theta)=0$.
Therefore, there is a unique $\theta_*>0$ for which
$g(\theta_*)=\alpha_*$.

Let us assume that $S\in\R^{d\times d}$ is a matrix satisfying 
$\bar{h}(S)=\frac{\alpha_*}{d} I$. By symmetry, we can 
assume $S$ to be diagonal, with
positive diagonal elements $s_1,\ldots,s_d>0$.
Let $e_1,\ldots,e_d$ stand for the standard basis vectors of $\R^d$.
The diagonal element $[\bar{h}(S)]_{ii}=\frac{\alpha_*}{d}$ is equivalent to
\[
    \int_{\mathcal{S}^d} \left[g(\|S v\|)-\alpha_*\right] (v^T e_i)^2
    \mu(\ud v) = 0,
\]
since $\int_{\mathcal{S}^d} (v^T e_i)^2 \mu(\ud v)=d^{-1}$.
Denoting $\bar{g}(\|S v\|) \defeq g(\|S v\|)-\alpha_*$, this implies
\begin{equation}
    \int_{\mathcal{S}^d} 
    \bar{g}\Big(\big(\textstyle\sum_{i=1}^d s_i^2 v_i^2\big)^{1/2}\Big)
    \Big(\textstyle\sum_{i=1}^d \lambda_i v_i^2\Big)
    \mu(\ud v) = 0
    \label{eq:lin-comb}
\end{equation}
for any choice of the constants $\lambda_i\in\R$. Particularly, choosing
$\lambda_i=1$ for $i=1,\ldots,d$ implies that for any constant $c\in\R$ we 
have
\begin{equation}
    \int_{\mathcal{S}^d} 
    \bar{g}\Big(\big(\textstyle\sum_{i=1}^d s_i^2 v_i^2\big)^{1/2}\Big)
    c \mu(\ud v) = 0.
    \label{eq:const}
\end{equation}
Now, summing \eqref{eq:lin-comb} and \eqref{eq:const} 
with a specific choice of constants $c=\theta_*^2$ and $\lambda_i=-s_i^2$,
we obtain
\[
    \int_{\mathcal{S}^d} 
    \bar{g}\Big(\big(\textstyle\sum_{i=1}^d s_i^2 v_i^2\big)^{1/2}\Big)
    \Big(\theta_*^2 - \textstyle\sum_{i=1}^d s_i^2 v_i^2\Big)
    \mu(\ud v) = 0.
\]
But now, $\bar{g}\big( (\textstyle\sum_{i=1}^d s_i^2 v_i^2)^{1/2}\big)
\ge 0$ exactly when $\textstyle\sum_{i=1}^d s_i^2 v_i^2\le \theta_*^2$,
so the integrand is always non-negative. 
Moreover, if any $s_i\neq \theta_*$, then by continuity
there is a neighbourhood $U_i\subset\mathcal{S}^d$ of $e_i$ 
such that the integrand is strictly positive, implying that the integral is
strictly positive. This concludes 
the proof of the uniqueness (\ref{item:uniqueness}).
\end{proof} 

The following theorem shows that when 
$\pi$ is the joint density of $d$ independent and identically
distributed random variables, the RAM algorithm has, as expected, a stable point
proportional to the identity matrix.
\begin{theorem} 
    \label{th:fixed-point-iid} 
Assume $\alpha_*\in(0,1)$ and 
$\pi(x)=\prod_{i=1}^d p(x_i)$
for some one-dimensional density $p$.
Then, there exists a $\theta>0$ such that $\hat{h}(\theta I)=0$.
\end{theorem}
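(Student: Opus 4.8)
The plan is to reproduce, in this product setting, the argument behind Theorem~\ref{th:fixed-point-elliptic}. Writing $h_\pi$ for the mean field of the algorithm, recall from that proof that for a non-singular $S$ the equation $h_\pi(S)=0$ is equivalent to $\bar h(S)=\tfrac{\alpha_*}{d}I$, where $\bar h$ is the function introduced there; this uses only $\int_{\R^d}\tfrac{uu^T}{\|u\|^2}q(u)\,\ud u=\tfrac1d I$, which holds by the spherical symmetry of $q$. So it suffices to exhibit a single $\theta>0$ with $\bar h(\theta I)=\tfrac{\alpha_*}{d}I$.

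The first step is to show that $\bar h(\theta I)$ is a scalar multiple of the identity. Since $\pi(x)=\prod_i p(x_i)$ is invariant under coordinate permutations and $q$ is spherically symmetric, the substitution $x\mapsto Px$, $u\mapsto Pu$ with $P$ a permutation matrix gives $\bar h(\theta I)=P\,\bar h(\theta I)\,P^T$, so $\bar h(\theta I)$ has one common diagonal value and one common off-diagonal value. For the off-diagonal entries I would reflect a single coordinate: the map $x_i\mapsto-x_i$, $u_i\mapsto-u_i$, leaving the remaining coordinates fixed, preserves $\pi$ and $q$ while flipping the sign of $\tfrac{u_iu_j}{\|u\|^2}$ for each $j\neq i$, so the $(i,j)$ entry of $\bar h(\theta I)$ equals its own negative, hence is zero. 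This gives $\bar h(\theta I)=\tfrac{g(\theta)}{d}I$ with
\[
  g(\theta)\;\defeq\;\trace\bar h(\theta I)\;=\;\int_{\R^d}\int_{\R^d}\min\Big\{1,\tfrac{\pi(x+\theta u)}{\pi(x)}\Big\}\,q(u)\,\ud u\,\pi(x)\,\ud x,
\]
the expected acceptance probability of the random walk Metropolis chain with proposal increment $\theta U$, $U\sim q$.

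It then remains to solve the scalar equation $g(\theta)=\alpha_*$. As in the elliptic case I would check --- using the estimates behind Proposition~\ref{prop:convolutions} in Appendix~\ref{app:convolutions}, or directly by dominated convergence --- that $g$ is continuous on $(0,\infty)$, that $g(\theta)\to1$ as $\theta\to 0+$ (because $\min\{1,\pi(x+\theta u)/\pi(x)\}\to1$ for $\pi$-almost every $x$ and every $u$), and that $g(\theta)\to0$ as $\theta\to\infty$ (because $\int_{\R^d}\min\{\pi(x),\pi(x+w)\}\,\ud x\to0$ as $\|w\|\to\infty$). Since $\alpha_*\in(0,1)$, the intermediate value theorem yields $\theta_*>0$ with $g(\theta_*)=\alpha_*$, and then $h_\pi(\theta_* I)=0$ by the reduction of the first step.

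The step I expect to carry the real content is the reduction of the matrix equation to the scalar one, that is, showing that $\bar h(\theta I)$ is a multiple of the identity. Permutation invariance by itself only forces the off-diagonal entries to coincide; the coordinate reflection above removes them cleanly when the one-dimensional factor $p$ is symmetric, and that symmetry is exactly what the argument rests on --- for a skewed $p$ the matrix $h_\pi(\theta I)$ need not be diagonal, so one would need an extra assumption on $p$ or a genuinely different route. Making the limiting behaviour of $g$ rigorous is technical but routine, following the pattern of Theorem~\ref{th:fixed-point-elliptic}.
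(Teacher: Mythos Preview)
Your overall architecture matches the paper's proof: reduce $h_\pi(\theta I)=0$ to $\bar h(\theta I)=\tfrac{\alpha_*}{d}I$, use permutation symmetry of the product target to equate the diagonal entries, handle the off-diagonal entries by a reflection/sign-flip, and finish with an intermediate value argument for the scalar $g(\theta)$. The limiting behaviour of $g$ is indeed routine and goes exactly as you sketch.

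Where you and the paper diverge is only in how the off-diagonal entries are killed. You reflect a single coordinate, $(x_i,u_i)\mapsto(-x_i,-u_i)$. The paper first shifts $x\mapsto x-\tfrac12 r\theta v$ to symmetrise the two arguments of the minimum,
\[
\int_{\R^d}\min\{\pi(z),\pi(z+r\theta v)\}\,\ud z
=\int_{\R^d}\min\{\pi(z-\tfrac12 r\theta v),\pi(z+\tfrac12 r\theta v)\}\,\ud z,
\]
and then asserts that, by the product form of $\pi$, one may change the sign of any coordinate of $v$ without affecting this integral; this is used to match the integrals over $E_{i,j}^+=\{v:v_iv_j>0\}$ and $E_{i,j}^-$.

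Your caution about needing $p$ to be symmetric is exactly right, and it applies equally to the paper's version of the argument. Flipping $v_j$ in the shifted integral swaps the $j$th factors $p(z_j-w_j)$ and $p(z_j+w_j)$ between the two products inside the minimum, but the remaining factors are not swapped, so the integrand is genuinely changed unless $p$ is even. Concretely, for $d=2$ and $p(t)=e^{-t}\mathbb{1}_{\{t\ge0\}}$ one computes, with $w=(w_1,w_2)$ and $w_1,w_2>0$,
\[
\int_{\R^2}\min\{\pi(x),\pi(x+w)\}\,\ud x=e^{-w_1-w_2},
\qquad
\int_{\R^2}\min\{\pi(x),\pi(x+(w_1,-w_2))\}\,\ud x=e^{-\max(w_1,w_2)},
\]
so the difference is strictly negative on $\{w_1,w_2>0\}$ and the $(1,2)$ entry of $\bar h(\theta I)$ is nonzero for every $\theta>0$. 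In other words, for a genuinely skewed factor $p$ there is no $\theta$ with $h_\pi(\theta I)=0$; the symmetry hypothesis you flagged is not merely a convenience of your proof but is needed for the statement itself. Under that extra assumption your proof and the paper's coincide.
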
 
\begin{proof} 
    Let $e_1,\ldots,e_d$ stand for the coordinate vectors of $\R^d$.
    Consider the functions
    \begin{equation*}
    a_i(\theta)\defeq 
    \int_{\mathcal{S}^d}
    \int_0^\infty 
    \left(\int_{\R^d}
\min\left\{\pi(x),\pi(x+r\theta u)\right\}    
    \ud x \right)  \hat{q}(r) \ud r
    (u^T e_i)^2
    \mathcal{H}^{d-1}(\ud u).
    \end{equation*}
    Let $P$ be a permutation matrix.
    It is easy to see that $\pi(x+r\theta u) = \pi\big(P(x+r\theta u)\big)$ 
    by the i.i.d.~product form of $\pi$. Therefore, by the change of
    variable $Px =z$ and $Pu=v$, one obtains that
    \begin{multline*}
    a_i(\theta) =
    \int_{\mathcal{S}^d}
    \int_0^\infty 
    \left(\int_{\R^d}
\min\left\{\pi(z),\pi(z+r\theta v)\right\}    
    \ud x \right) \\
    \times \hat{q}(r) \ud r
    (v^T P^T e_i)^2
    \mathcal{H}^{d-1}(\ud v)
    = a_j(\theta)
    \end{multline*}
    by a suitable choice of $P$.
    Moreover, $\lim_{\theta\to\infty} a_i(\theta)=0$ and
    $\lim_{\theta\to 0+} a_i(\theta)= c \defeq \int_{\mathcal{S}^d} (u^T e_i)^2
    \mathcal{H}^{d-1}(\ud u)$ and $a_i$ are continuous. Therefore,
    there exists a $\theta_*>0$ such that $a_i(\theta_*)=
    a_*c $, and so 
    $e_i^T h(\theta_* I) e_i=0$. 
    
    It remains to show that $e_i h(\theta_* I) e_j = 0$ for all $i\neq j$.
    But for this, it is enough to show that
    the integrals of the form
    \begin{equation*}
        \int_{E_{i,j}^*} 
    \int_0^\infty 
    \left(\int_{\R^d}
\min\left\{\pi(z),\pi(z+r\theta v)\right\}
    \ud x \right)
    \hat{q}(r) \ud r
    |(v^T e_i)(v^T e_j)|
    \mathcal{H}^{d-1}(\ud v)
    \end{equation*}
    have the same value for both $E_{i,j}^{+} 
    \defeq \{v\in\mathcal{S}^d:(v^T e_i)(v^T e_j)> 0\}$ and
    $E_{i,j}^{-}\defeq \{v\in\mathcal{S}^d:(v^T e_i)(v^T e_j)< 0\}$.
    But this is obtained due to the symmetry of the sets $E_{i,j}^{+}$ and
    $E_{i,j}^{-}$ and the product form of $\pi$, since
    \begin{equation*}
        \int_{\R^d}
\min\left\{\pi(z),\pi(z+r\theta v)\right\}
    \ud x
    = \int_{\R^d}
\min\left\{\textstyle\pi\big(z-\frac{1}{2}r\theta v\big),\pi\big(z+\frac{1}{2}r\theta
  v\big)\right\}
    \ud x
    \end{equation*}
    so one can change the sign of any coordinate of $v$ without
    affecting this integral.
    This concludes the claim.
\end{proof} 

\begin{remark} 
    Checking the existence and uniqueness in a more general setting
    it is out of the scope of this paper.
    It is believed that there always exists at least one solution
    $S_*\in\R^{d\times d}$ such that $h(S_*)=0$.
    Notice, however, that the fixed point may not 
    be always unique; see an example of such a situation for 
    one-dimensional adaptation (the ASM algorithm) in
    \cite[Section 4.4]{hastie-phd}.
\end{remark} 

\begin{remark} 
\label{rem:stable-drift} 
It is not very difficult to show that for any given target $\pi$ and
proposal $q$, there exist some constants $0<\theta_1<\theta_2<\infty$ such
that the matrices $h_\pi(\theta_1 I)$ and $h_\pi(\theta_2 I)$ are positive
definite and negative definite, respectively. This indicates that, on
average, $S_n$ should shrink whenever it is `too big' and expand whenever
it is `too small,' so the algorithm should admit a stable behaviour. 
The empirical results in Section \ref{sec:tests} support the hypothesis of
general stability.
\end{remark} 

To be more precise, we can identify a Lyapunov function $w_\pi$
for $h_\pi$ in the case $\pi$ is elliptically symmetric with a
non-increasing tail. This will allow us to establish the convergence
of the sequence $(S_n S_n^T)_{n\ge 1}$ in Theorem \ref{th:convergence}.
\begin{theorem} 
    \label{th:lyapunov} 
Assume the conditions of Theorem \ref{th:fixed-point-elliptic} (ii)
and denote $R_* \defeq S_* S_*^T$.
Define a function $w_\pi:\R^{d\times d}\to[0,\infty)$ by
\[
    w_\pi(R) \defeq 
    \trace(R_*^{-1} R) - \log\left(\frac{\det R}{\det R_*}\right) - d.
\]
Then, for any non-singular $S\in\R^{d\times d}$ it holds that $\big\langle
\nabla w_\pi(S S^T),h_\pi(S)\big\rangle \le 0$ with equality only if $S S^T=R_*$.
\end{theorem}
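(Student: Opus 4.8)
The plan is to compute the gradient $\nabla w_\pi(R)$ explicitly and pair it against $h_\pi(S)$, reducing everything via the affine invariance (Theorem~\ref{th:invariance}) to the spherically symmetric case where $R_* = \theta_*^2 I$. Recall that for a symmetric positive definite $R$, $\nabla_R \trace(R_*^{-1}R) = R_*^{-1}$ and $\nabla_R \log\det R = R^{-1}$, so $\nabla w_\pi(R) = R_*^{-1} - R^{-1}$. Hence
\[
    \big\langle \nabla w_\pi(SS^T), h_\pi(S)\big\rangle
    = \trace\!\Big( (R_*^{-1} - (SS^T)^{-1}) h_\pi(S)\Big).
\]

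Next I would invoke Theorem~\ref{th:invariance}\,(\ref{item:affine}) to replace $\pi$ by a spherically symmetric density: writing $S_* = \theta_* I$ after the reduction, and using that $h_\pi$ transforms by conjugation while the trace is conjugation-invariant, the inner product is unchanged under the affine change of variables. So it suffices to treat $R_* = \theta_*^2 I$. Now use the substitution $\tilde h(S) \defeq S^{-1} h_\pi(S) S^{-T}$ from the proof of Theorem~\ref{th:fixed-point-elliptic}, so that $h_\pi(S) = S\,\tilde h(S)\,S^T$, and
\[
    \trace\!\Big( (R_*^{-1} - (SS^T)^{-1}) S\,\tilde h(S)\, S^T\Big)
    = \trace\!\Big( (\theta_*^{-2} S^T S - I)\, \tilde h(S)\Big).
\]
Diagonalising the symmetric positive definite matrix $S^TS$ with eigenvalues $\sigma_1^2,\dots,\sigma_d^2$ — and exploiting the rotation invariance (Theorem~\ref{th:invariance}\,(\ref{item:rotation})) to reduce to $S$ with these singular values, i.e.\ $S^TS$ diagonal — one gets a sum $\sum_i (\theta_*^{-2}\sigma_i^2 - 1)\,[\tilde h(S)]_{ii}$. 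Using the polar-coordinate form from the earlier proof, $[\tilde h(S)]_{ii} = \int_{\mathcal{S}^d} \bar g\big((\sum_j \sigma_j^2 v_j^2)^{1/2}\big)\, v_i^2\, \mu(\ud v) + \tfrac{\alpha_*}{d}\cdot(\text{const})$; in fact it is cleaner to write everything in terms of $\bar g(\|Sv\|) = g(\|Sv\|) - \alpha_*$ as in Theorem~\ref{th:fixed-point-elliptic}, so that the whole inner product becomes
\[
    \frac{1}{\theta_*^2}\int_{\mathcal{S}^d}
    \bar g\Big(\big(\textstyle\sum_i \sigma_i^2 v_i^2\big)^{1/2}\Big)
    \Big(\textstyle\sum_i \sigma_i^2 v_i^2 - \theta_*^2\Big)\,\mu(\ud v),
\]
after the constant term drops out against a copy of $\bar g$ integrated against $c$, exactly as in the uniqueness argument (equations \eqref{eq:lin-comb}--\eqref{eq:const}).

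The sign is now immediate from the monotonicity of $g$ established in Theorem~\ref{th:fixed-point-elliptic}\,(\ref{item:uniqueness}): $\bar g(t) = g(t) - \alpha_*$ is $\ge 0$ precisely when $t \le \theta_*$, equivalently $\sum_i \sigma_i^2 v_i^2 \le \theta_*^2$, so the integrand $\bar g(\cdot)\,(\sum_i \sigma_i^2 v_i^2 - \theta_*^2)$ is everywhere $\le 0$, giving $\langle \nabla w_\pi(SS^T), h_\pi(S)\rangle \le 0$. For the equality case: if $SS^T \ne R_*$ then after the reductions some $\sigma_i \ne \theta_*$, and by continuity of $g$ there is a neighbourhood of $e_i$ on $\mathcal{S}^d$ of positive $\mu$-measure on which both factors are nonzero and of strictly opposite sign (unless $g$ has already hit zero there, but near $e_i$ with $\sigma_i$ slightly off $\theta_*$ that is ruled out by continuity and $g(\theta_*) = \alpha_* > 0$), so the integral is strictly negative — exactly the argument used to close Theorem~\ref{th:fixed-point-elliptic}\,(\ref{item:uniqueness}). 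The main obstacle is bookkeeping: justifying the rotation reduction so that $S^TS$ may be taken diagonal, and making sure $w_\pi$ is differentiated as a function on the open cone of symmetric positive definite matrices (which is where $SS^T$ lives) so that the formula $\nabla w_\pi(R) = R_*^{-1} - R^{-1}$ is legitimate and the pairing with $h_\pi(S)$ — itself symmetric by Proposition~\ref{prop:posdef} — is the right notion of $\langle\cdot,\cdot\rangle$.
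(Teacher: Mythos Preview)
Your proposal is correct and follows essentially the same route as the paper: reduce via affine invariance to the spherically symmetric case, express the inner product as the single integral $\int_{\mathcal{S}^d}\bar g\big((\sum_i \sigma_i^2 v_i^2)^{1/2}\big)\big(\sum_i \sigma_i^2 v_i^2-\theta_*^2\big)\,\mu(\ud v)$, and conclude sign and strictness from the monotonicity of $g$ exactly as in Theorem~\ref{th:fixed-point-elliptic}\,(\ref{item:uniqueness}). The only cosmetic difference is that the paper uses the SVD $S=U\bar S V^T$ and computes $\trace(h_{\hat\pi}(S))$ and $\trace((SS^T)^{-1}h_{\hat\pi}(S))$ separately, whereas you pass through $\tilde h(S)=S^{-1}h_\pi(S)S^{-T}$ and diagonalise $S^TS$ by a right rotation; both organisations yield the same integrand.
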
 
\begin{proof} 
Denote $\hat{\pi}(x) \defeq \det(R_*)^{1/2} \pi(R_*^{1/2} x)$, then by Theorem
\ref{th:invariance} (\ref{item:affine}) $h_{\pi}(S) = R_*^{1/2}
h_{\hat{\pi}}(R_*^{-1/2} S)R_*^{1/2}$.
Moreover, Theorem \ref{th:fixed-point-elliptic} (ii) together with
Theorem \ref{th:invariance}
(\ref{item:uniq-invar}) imply that $\hat{\pi}$ is spherically symmetric
and $S=I$ is the unique solution of 
$h_{\hat{\pi}}(S)=0$ (up to orthogonal transformations). 

We can write
\begin{align*}
    \nabla w_\pi\big(R_*^{1/2} S (R_*^{1/2} S)^T\big) 
&= R_*^{-1/2}(I - (SS^T)^{-1})R_*^{-1/2}  
= R_*^{-1/2} \nabla w_{\hat{\pi}}(S) R_*^{-1/2}, 
\end{align*}
so we obtain
\begin{align*}
    \big\langle
    \nabla w_\pi\big(R_*^{1/2} S (R_*^{1/2} S)^T\big),
    h_\pi(R_*^{1/2} S\big\rangle 
    &= \trace\big[\nabla w_\pi\big(R_*^{1/2} S (R_*^{1/2}
    S)^T\big)^T
    h_\pi(R_*^{1/2})\big] \\
    &=\big\langle
    \nabla w_{\hat{\pi}}(S),
    h_{\hat{\pi}}(S)\big\rangle.
\end{align*}
Therefore, it is sufficient to check that the claim holds 
for spherically symmetric $\hat{\pi}$ with $R_* = I$. 

Let $S$ be non-singular and write the singular value decomposition 
$S=U\bar{S}V^T$ where $U$ and $V$ are orthogonal and
$\bar{S}=\diag(\bar{s}_1,\ldots,\bar{s}_d)$ with positive
diagonal entries. By 
Theorem \ref{th:invariance} (\ref{item:rotation})
we have $h_{\hat{\pi}}(S) = h_{\hat{\pi}}(SV) = h_{\hat{\pi}}(U\bar{S})$.
We may write, using the notation in Theorem
\ref{th:fixed-point-elliptic},
\begin{align*}
\trace\big(h_{\hat{\pi}}(S)\big)
&=\trace\big(U^Th_{\hat{\pi}}(U\bar{S}) U\big) 
= \int_{\mathcal{S}^d} 
\bar{g}\big(\| \bar{S} w \|\big)\left[
{\textstyle\sum_{i=1}^d }\bar{s}_i^2 w_i^2 \right]\mu(\ud w).
\end{align*}
We have $SS^T = U \bar{S}^2 U^T$, so we obtain similarly
\begin{align*}
\trace\big((SS^T)^{-1} h_{\hat{\pi}}(S)\big) &= 
\trace\big(\bar{S}^{-1} U^T h_{\hat{\pi}}(S V) U \bar{S}^{-1}\big) 
=\int_{\mathcal{S}^d} \bar{g}\big(\| \bar{S} w \|\big)
\mu(\ud w).
\end{align*}
Putting everything together,
\begin{equation*}
    \big\langle
\nabla w_{\hat{\pi}}(SS^T),h_{\hat{\pi}}(S)\big\rangle 
 =\int_{\mathcal{S}^d} 
\bar{g}\Big(\big( {\textstyle \sum_{i=1}^d }
\bar{s}_i^2 w_i^2\big)^{1/2}\Big)\Big(
{\textstyle\sum_{i=1}^d }\bar{s}_i^2 w_i^2 -1 \Big)\mu(\ud w).
\end{equation*}
As in the proof of Theorem \ref{th:fixed-point-elliptic}, 
$\bar{g}\big( ( {\textstyle \sum_{i=1}^d }\bar{s}_i^2
w_i^2)^{1/2}\big) >0$ exactly when $\sum_{i=1}^d \bar{s}_i^2
w_i^2< 1$ and vice versa. The integral can equal zero only if 
all $\bar{s}_i=1$.
\end{proof} 


\section{Validity}
\label{sec:lln} 

This section describes some sufficient conditions under which the RAM
algorithm is valid; that is, when the empirical averages converge
to the integral
\begin{equation}
    I_n = \frac{1}{n}\sum_{k=1}^n f(X_k) \xrightarrow{n\to\infty}
    \int_{\R^d} f(x) \pi(x) \ud x \eqdef I
    \label{eq:slln}
\end{equation}
almost surely.

Let us start by introducing assumptions on the forms of the proposal 
density $q$ and the target density $\pi$.
\begin{assumption} 
    \label{a:proposal} 
The proposal density $q$ is either a Gaussian or a Student
distribution, that is,
    \[
        q(z) \propto e^{-\frac{1}{2}\|z\|^2}
        \qquad\text{or}\qquad
        q(z) \propto (1 + \|z\|^2)^{-\frac{d+p}{2}}
    \]
    for some constant $p>0$.
\end{assumption} 
\begin{assumption} 
    \label{a:target} 
The target density $\pi$ satisfies either of the following assumptions.
\begin{enumerate}[(i)]
    \item \label{item:compact} 
      The density $\pi$ is bounded and supported on a bounded set: there
exists a constant $m<\infty$ such that $\pi(x)=0$ for all $\|x\|\ge m$.
   \item \label{item:super-exp}
    The density $\pi$ is positive everywhere in $\R^d$ 
    and continuously
    differentiable. The tails of $\pi$ are super-exponentially decaying and
    have regular contours, that is, respectively
    \begin{eqnarray*}
        \lim_{\|x\|\to\infty} \frac{x}{\|x\|} \cdot \nabla \log \pi(x) 
        &=& -\infty
        \qquad\text{and}\\
        \limsup_{\|x\|\to\infty} 
        \frac{x}{\|x\|} \cdot \frac{\nabla \pi(x)}{\|\nabla \pi(x)\|} &<& 0.
    \end{eqnarray*}
\end{enumerate}
\end{assumption} 
\begin{remark} 
    Assumption \ref{a:target} 
    ensures the geometric ergodicity of
    the RWM algorithm under fairly general settings; 
    \cite{jarner-hansen} discuss the limitations of
    (\ref{item:super-exp}) and give several
    examples.
\end{remark} 
Before stating the theorem, consider the following 
conditions on the adaptation step size sequence
$(\eta_n)_{n\ge 1}$ and on the stability of the process $(S_n)_{n\ge 1}$.
\begin{assumption} 
    \label{a:step-size} 
The adaptation step sizes $\eta_n\in[0,1]$ are non-increasing and satisfy
$\sum_{n=1}^\infty k^{-1} \eta_n < \infty$.
\end{assumption} 
\begin{assumption} 
    \label{a:stability} 
There exist random variables $0\le A\le B\le \infty$ such that all 
the eigenvalues
$\lambda_n^{(i)}$ of the random matrices $S_n S_n^T$ are almost surely
bounded by $A\le \lambda_n^{(i)} \le B$, for all $n=1,2,\ldots$ and all
$i=1,\ldots,d$.
\end{assumption} 
\begin{theorem} 
\label{th:slln} 
Suppose Assumptions \ref{a:proposal}--\ref{a:stability} hold
and denote $\Omega_0\defeq \{A>0,\, B<\infty\}$.
Suppose also that the function $f:\R^d\to\R$ satisfies
for some $p\in[0,1)$
\[
    \sup_{x\in\R^d:\pi(x)>0} |f(x)| \pi^{-p}(x)<\infty.
\]
Then, for almost every $\omega\in\Omega_0$, the strong law of
large numbers \eqref{eq:slln} holds.
\end{theorem}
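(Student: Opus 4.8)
The plan is to verify the hypotheses of a general ergodicity theorem for adaptive MCMC — specifically the containment/diminishing-adaptation framework of Roberts and Rosenthal, or equivalently the approach in \cite{vihola-asm} for the ASM algorithm, which is the natural precursor here. The key structural observations are: (a) for a fixed shape matrix $S$, step (R\ref{step:proposal})--(R\ref{step:acc-rej}) is an ordinary RWM kernel $P_S$, and under Assumptions \ref{a:proposal} and \ref{a:target} this kernel is geometrically ergodic with drift and minorisation constants that are \emph{uniform} over $S$ ranging in any compact set of non-singular matrices bounded away from singularity; and (b) on $\Omega_0$, Assumption \ref{a:stability} confines $S_n S_n^T$ to exactly such a compact set. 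Thus the family $\{P_{S_n}\}$ is simultaneously uniformly ergodic in the relevant $V$-norm sense, which is the \emph{containment} condition. The \emph{diminishing adaptation} condition follows from the update equation \eqref{eq:ram-update}: since $\|U_nU_n^T/\|U_n\|^2\|=1$ and $|\alpha_n-\alpha_*|\le 1$, the multiplicative perturbation of $S_{n-1}S_{n-1}^T$ is of size $O(\eta_n)$, and $\eta_n\to 0$ by Assumption \ref{a:step-size}, so $\|S_n-S_{n-1}\|\to 0$ and consequently $\sup_x\|P_{S_n}(x,\cdot)-P_{S_{n-1}}(x,\cdot)\|\to 0$ in the appropriate weighted total variation.

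From containment plus diminishing adaptation one gets ergodicity (convergence of marginals to $\pi$); to upgrade this to the \emph{strong} law of large numbers \eqref{eq:slln} for unbounded $f$ with $|f|\le c\,\pi^{-p}$, I would invoke a strong-law result for adaptive chains such as the one in \cite{atchade-rosenthal} or \cite{atchade-fort}, or argue via a Poisson-equation/martingale decomposition: write $f(X_k)-I = (\hat f_{S_{k-1}}(X_k) - P_{S_{k-1}}\hat f_{S_{k-1}}(X_{k-1})) + (P_{S_{k-1}}\hat f_{S_{k-1}}(X_{k-1}) - P_{S_{k-1}}\hat f_{S_{k-1}}(X_k)) + (\text{solution-function increments})$, where $\hat f_S$ solves the Poisson equation for $P_S$. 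The first bracket forms a martingale-difference array whose conditional variances are controlled by the uniform $V$-norm bounds, so its averages vanish by a martingale SLLN; the telescoping second bracket contributes $O(1/n)$ on a fixed trajectory because $V(X_k)$ has bounded time-averages (from the uniform drift condition); and the third part is controlled because $S\mapsto \hat f_S$ is Lipschitz in the $V$-norm on the compact parameter set, combined with $\sum_n \eta_n k^{-1}<\infty$ — this is precisely where Assumption \ref{a:step-size} is used. The moment condition $p<1$ together with the Student/Gaussian form of $q$ (Assumption \ref{a:proposal}) guarantees that the natural Lyapunov function $V=\pi^{-p}$ (or $V=c\pi^{-s}$ for suitable $s$) satisfies a drift condition for every $P_S$ in the family, and that $|f|\le c V^{r}$ for some $r<1$, so the Poisson solutions have finite $V$-norm.

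Concretely, the steps in order: \textbf{(1)} fix notation for $P_S$ and the restricted parameter set $\mathcal{K}_{a,b}=\{S: aI\preceq SS^T\preceq bI\}$; \textbf{(2)} establish a uniform-in-$S$ minorisation on compacta and a uniform drift condition $P_S V\le \lambda V + b\mathbbm{1}_C$ with $\lambda<1$, $C$ compact, both independent of $S\in\mathcal{K}_{a,b}$ — citing the geometric-ergodicity results for RWM under Assumption \ref{a:target} and noting the constants can be taken locally uniform in the proposal scaling; \textbf{(3)} deduce containment; \textbf{(4)} prove diminishing adaptation from \eqref{eq:ram-update} and $\eta_n\to 0$; \textbf{(5)} localise to $\Omega_0$: on this event pick (random) $0<a\le b<\infty$ with $S_nS_n^T\in\mathcal{K}_{a,b}$ for all $n$, so the preceding uniform bounds apply pathwise; \textbf{(6)} apply the adaptive SLLN, invoking the $\sum \eta_n/n<\infty$ condition to control the adaptation-induced drift in the Poisson-solution decomposition. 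The main obstacle is step (2) combined with step (6): making the drift/minorisation constants genuinely uniform over the matrix family (not just the scalar scale, as in ASM) and then verifying the Lipschitz dependence $S\mapsto \hat f_S$ in $V$-norm carefully enough that $\sum_n \eta_n\,\|\hat f_{S_n}-\hat f_{S_{n-1}}\|_V$-type terms are summable — the matrix-valued adaptation makes the perturbation analysis of the Poisson equation more delicate than in the one-dimensional ASM case, and one must check that the geometric ergodicity of RWM is stable under the full affine reshaping $x\mapsto S^{-1}x$, which is exactly where Assumption \ref{a:target}(\ref{item:super-exp})'s contour-regularity condition does the work.
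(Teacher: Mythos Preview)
Your proposal is correct and follows essentially the same route as the paper. The paper packages your steps (1)--(6) as four conditions (A1)--(A4) --- unique invariant measure, simultaneous drift/minorisation over $\spc{S}_{a,b}$, Lipschitz dependence $s\mapsto P_s$ in $V^r$-norm, and boundedness of the update $H$ --- and then cites \cite[Proposition~6]{andrieu-moulines} for the SLLN, which internally carries out exactly the Poisson-equation/martingale decomposition you sketch; the localisation to $\Omega_0$ is handled via deterministic exhausting levels $[a_\epsilon,b_\epsilon]$ capturing $\Omega_0$ up to probability $\epsilon$, rather than pathwise random $a,b$ as you phrased it, with the details deferred to \cite[Theorem~2]{saksman-vihola} and \cite[Theorem~20]{vihola-asm}. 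Your initial framing through Roberts--Rosenthal containment/diminishing-adaptation is a slightly different (and weaker) abstraction than the Andrieu--Moulines stochastic-approximation framework the paper invokes, but you correctly recognise that containment alone does not give the SLLN for $|f|\le cV^r$ and pivot to the Poisson-equation argument, which is precisely what Proposition~6 of \cite{andrieu-moulines} encapsulates.
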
 
The proof follows
by existing results in the literature; the details 
are given in Appendix \ref{sec:slln-proof}.

The convergence of the adaptation can also be established in case
$\pi$ is elliptically symmetric.
\begin{theorem} 
\label{th:convergence} 
  If the conditions of Theorem \ref{th:fixed-point-elliptic}
  (\ref{item:uniqueness}) and Theorem \ref{th:slln} hold and
  additionally $\sum_n \gamma_n = \infty$, then $S_{n} S_{n}^T \to S_*
  S_*^T$ for almost every $\omega\in\Omega_0$.
\end{theorem}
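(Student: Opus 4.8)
The plan is to view the recursion \eqref{eq:ashm-rm} as a Robbins--Monro stochastic approximation on $R_n \defeq S_n S_n^T$ and to invoke a standard convergence theorem for stochastic approximation with a Lyapunov function, using the ingredients assembled in the previous sections. The key facts already in hand are: the mean field $h_\pi$ has a unique zero $R_* = S_* S_*^T$ among admissible matrices (Theorem \ref{th:fixed-point-elliptic}(\ref{item:uniqueness})); the function $w_\pi$ of Theorem \ref{th:lyapunov} is a strict Lyapunov function, i.e.\ $\langle \nabla w_\pi(SS^T), h_\pi(S)\rangle \le 0$ with equality only at $R_*$; and Theorem \ref{th:slln} gives, on $\Omega_0$, the validity of the law of large numbers, which is what makes the ``averaging'' of the noise term work. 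The step-size conditions $\eta_n \to 0$, $\sum_n \gamma_n = \infty$ (here $\gamma_n$ plays the role of the effective step size of the mean-field drift, e.g.\ $\gamma_n = \eta_n$ up to the acceptance-rate normalisation) and $\sum_n n^{-1}\eta_n < \infty$ from Assumption \ref{a:step-size} are exactly the Robbins--Monro summability requirements.

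First I would fix $\omega \in \Omega_0$ and work on the compact set $\mathcal{K} \defeq \{R : A I \preceq R \preceq B I\}$ in which $R_n$ stays by Assumption \ref{a:stability}; note $R_* \in \mathcal{K}$ when $A>0, B<\infty$. On $\mathcal{K}$ the map $S \mapsto h_\pi(S)$ and $R \mapsto \nabla w_\pi(R)$ are continuous and bounded, and $w_\pi$ is bounded below (by $0$) and has $R_*$ as its unique minimiser on $\mathcal{K}$. Next I would decompose the increment $R_n - R_{n-1} = \eta_n h_\pi(S_{n-1}) + \eta_n \xi_n$, where $\xi_n \defeq H(S_{n-1},X_{n-1},U_n) - h_\pi(S_{n-1})$ is a noise term. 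The noise is not a martingale difference because $X_{n-1}$ is not distributed as $\pi$; the standard device is a Poisson (resolvent) decomposition of $\xi_n$ against the Metropolis kernel, writing $\xi_n = e_n + (g_{n-1} - g_n) + r_n$ with $e_n$ a martingale difference, $g_n$ a solution of a Poisson equation, and $r_n$ a remainder controlled by the modulus of continuity of $S \mapsto$ (the Poisson solution). The LLN machinery invoked for Theorem \ref{th:slln} — in particular the geometric ergodicity of the frozen RWM kernels uniform over $\mathcal{K}$ coming from Assumptions \ref{a:proposal}--\ref{a:target} — supplies exactly the bounds needed: a uniform bound on $g_n$, and a Lipschitz-type bound $\|g_n - g_{n-1}\| \lesssim \|S_n - S_{n-1}\| \lesssim \eta_n$, so that $\sum_n \eta_n \|g_n - g_{n-1}\| < \infty$ follows from $\sum_n \eta_n^2 < \infty$ (which holds since $\eta_n \le 1$ and $\sum_n n^{-1}\eta_n < \infty$ forces $\eta_n \to 0$; and in practice $\eta_n = n^{-\gamma}$, $\gamma > 1/2$), and the martingale part converges a.s.\ by the $L^2$ martingale convergence theorem. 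Alternatively one simply cites the Kushner--Clark / Andrieu--Moulines stochastic-approximation theorem whose hypotheses are: bounded iterates, mean field continuous with a Lyapunov function, step sizes satisfying $\sum \eta_n = \infty$, $\sum \eta_n^2 < \infty$, and a ``noise summability'' condition verified via the Poisson equation exactly as in the proof of Theorem \ref{th:slln}.

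With these pieces, the convergence argument is the textbook one: the sequence $w_\pi(R_n)$ is a nonnegative ``almost supermartingale'' — $w_\pi(R_n) \le w_\pi(R_{n-1}) + \eta_n \langle \nabla w_\pi(R_{n-1}), h_\pi(S_{n-1})\rangle + (\text{summable noise}) + O(\eta_n^2)$, and since the drift term is $\le 0$, the Robbins--Siegmund lemma gives that $w_\pi(R_n)$ converges a.s.\ and $\sum_n \eta_n |\langle \nabla w_\pi(R_{n-1}), h_\pi(S_{n-1})\rangle| < \infty$. Because $\sum_n \eta_n = \infty$, this forces $\liminf_n \langle \nabla w_\pi(R_{n-1}), h_\pi(S_{n-1})\rangle = 0$; by compactness of $\mathcal{K}$, continuity, and the strict inequality in Theorem \ref{th:lyapunov} (equality only at $R_*$), every limit point of $R_n$ must be $R_*$, hence $R_n \to R_*$. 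Finally, since $w_\pi(R_n)$ converges and $w_\pi$ is continuous with a unique zero at $R_*$ on $\mathcal{K}$, the limit of $w_\pi(R_n)$ is $0$ and this is consistent with $R_n \to R_* = S_* S_*^T$, which is the claim for a.e.\ $\omega \in \Omega_0$.

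The main obstacle is the noise control, i.e.\ verifying the Poisson-equation/summability condition rigorously: one must show that the solution of the Poisson equation for the frozen Metropolis kernel $P_S$ is uniformly bounded and Lipschitz in $S$ over the compact set $\mathcal{K}$, and that $\|S_n - S_{n-1}\| = O(\eta_n)$ (which follows from \eqref{eq:ram-update} and the boundedness of $S_{n-1}$ and of $\alpha_n - \alpha_*$). This is precisely the same technical input already used (and cited to the literature) in the proof of Theorem \ref{th:slln}, so the honest write-up is to reduce \ref{th:convergence} to the stochastic-approximation convergence theorem of, e.g., \cite{andrieu-moulines} or \cite{benveniste-metivier-priouret}, checking its hypotheses one by one using Theorems \ref{th:fixed-point-elliptic}(\ref{item:uniqueness}), \ref{th:lyapunov}, Assumptions \ref{a:step-size}--\ref{a:stability}, and the ergodicity estimates behind Theorem \ref{th:slln}; I would relegate the detailed verification to the appendix alongside the proof of Theorem \ref{th:slln}.
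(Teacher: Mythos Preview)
Your proposal is essentially the same strategy as the paper's: reduce the claim to a packaged stochastic-approximation convergence theorem (the paper cites \cite[Theorem 5]{andrieu-moulines-volkov}) by supplying the Lyapunov function $w_\pi$ from Theorem~\ref{th:lyapunov}, the unique zero $R_*$ from Theorem~\ref{th:fixed-point-elliptic}(\ref{item:uniqueness}), the compact containment from Assumption~\ref{a:stability}, and the Poisson-equation noise control already verified for Theorem~\ref{th:slln}. One small slip: Assumption~\ref{a:step-size} (non-increasing $\eta_n$ with $\sum_n n^{-1}\eta_n<\infty$) does \emph{not} by itself imply $\sum_n \eta_n^2<\infty$ --- take $\eta_n=(\log n)^{-2}$ --- so when you check the step-size hypotheses of whichever theorem you cite, do it directly from the stated assumptions rather than via that implication.
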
 
The proof follows by Theorem \ref{th:lyapunov} and 
results in the literature; see 
Appendix \ref{sec:slln-proof}.

\begin{remark} 
Assumptions \ref{a:proposal}--\ref{a:step-size} are common when verifying
the ergodicity of an adaptive MCMC algorithm.
Assumption \ref{a:stability} on stability is natural but it
can be difficult to check with $\P(A>0,\,B<\infty)=1$ in practice. 
The empirical evidence supports this hypothesis under a very 
general setting;
see also Remark \ref{rem:stable-drift} in Section
\ref{sec:convergence}. 
Similar stability results have been established
only for few adaptive MCMC algorithms, including the AM and the
ASM algorithms~\citep{saksman-vihola,vihola-collapse,vihola-asm}. 
The precise stability analysis is beyond the scope of this
paper. Instead, the stability can be enforced as described below.

Let $0<a\le b<\infty$ be some constants so that the eigenvalues of $s_1 s_1^T$
are within $[a,b]$. Then, replace the step (R\ref{step:adapt}) in the RAM
algorithm with the following:
\begin{enumerate}[({R}1')]
    \setcounter{enumi}{2}
  \item compute the lower-diagonal matrix $\hat{S}_n$ with positive diagonal 
    so that $\hat{S}_n \hat{S}_n^T$ equals the right hand side of
    \eqref{eq:ram-update}.
    If the eigenvalues of $\hat{S}_n \hat{S}_n^T$ are within $[a,b]$, then 
    set $S_n = \hat{S}_n$, otherwise set $S_n = S_{n-1}$.
\end{enumerate}
While this modification ensures stability, it may change the stable
points of the algorithm and the conclusion of Theorem \ref{th:convergence} 
may not hold. This could possibly be avoided, for example, by
considering an adaptive reprojections approach
\citep{sa-verifiable,andrieu-moulines}, but we do not pursue this here.
\end{remark} 


\section{Experiments}
\label{sec:tests} 


The RAM algorithm was tested with three types of target distributions:
heavy-tailed Student, Gaussian and a mixture of Gaussians.
The performance of RAM was compared against
the seminal adaptive Metropolis (AM) algorithm \citep{saksman-am} and an
adaptive scaling within adaptive Metropolis (ASWAM) algorithms
\citep{andrieu-thoms,atchade-fort}. Especially the comparison against ASWAM
is of interest, since it attains a given acceptance rate like the RAM
algorithm. 

There are several parameters that are fixed throughout the experiments. The
adaptation step size sequence was set to $\eta_n = n^{-2/3}$ for the AM and the
ASWAM algorithms. For the RAM approach, the weight sequence was modified 
slightly so that $\eta_n = \min\{1,d\cdot n^{-2/3}\}$. The extra factor 
was added to compensate the
expected growth or shrinkage of the eigenvalues being of the order $d^{-1}$; 
see the proof of Theorem
\ref{th:fixed-point-elliptic}. The target mean acceptance rate was 
$\alpha_* = 0.234$. In all the experiments, the Student proposal
distribution of the form $q(z) = (1+\|z\|^2)^{-\frac{d+1}{2}}$ was used.
Such a heavy-tailed proposal was employed in order to have good convergence
properties in case of heavy-tailed target densities
\citep{jarner-roberts-heavytailed}.

All the tests were performed using the publicly available
Grapham software \citep{vihola-grapham}; the latest version of the software
includes an implementation of the RAM algorithm.


\subsection{Multivariate Student distribution}
\label{sec:student} 

The first example is a bivariate
Student distribution with $n=1$ degrees of freedom and
the following location and pseudo-covariance matrix
\[
    \mu=\begin{bmatrix}1\\ 2\end{bmatrix}\qquad\text{and}\qquad
    \Sigma = \begin{bmatrix}
      0.2 & 0.1\\
      0.1 & 0.8
      \end{bmatrix},
\]
respectively. That is, the target density $\pi(x)\propto(1+x^T \Sigma^{-1}
x)^{-3/2}$. Clearly, $\pi$ has no second moments and thereby the empirical covariance
estimate used by AM and ASWAM is deemed to be unstable in this example.

Figure~\ref{fig:adapt-student} shows the results for one hundred runs
of the algorithms. The grey area indicates the interval between the 10\% and
the 90\% percentiles, and the black line shows the median.
The top row shows the logarithm of the first diagonal
element of the matrix $S_n$.
\begin{figure*} 
    \psfragscanon
    \psfrag{0}[c][c]{\small 0}
    \psfrag{1}[c][c]{\small 1}
    \psfrag{2}[c][c]{\small 2}
    \psfrag{3}[c][c]{\small 3}
    \psfrag{5}[c][c]{\small 5}
    \psfrag{10}[c][c]{\small 10}
    \psfrag{15}[c][c]{\small 15}
    \psfrag{2e5}[c][c]{\small$200k$}
    \psfrag{4e5}[c][c]{\small$400k$}
    \psfrag{AM}[c][c]{\small AM}
    \psfrag{ASWAM}[c][c]{\small ASWAM}
    \psfrag{RAM}[c][c]{\small RAM}
    \begin{tabular}{rrr}
      \includegraphics{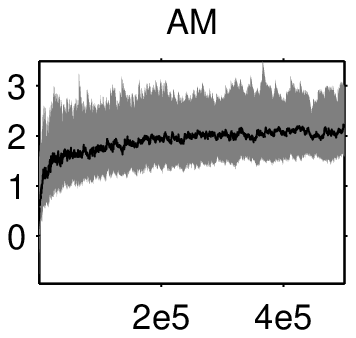}
      & \includegraphics{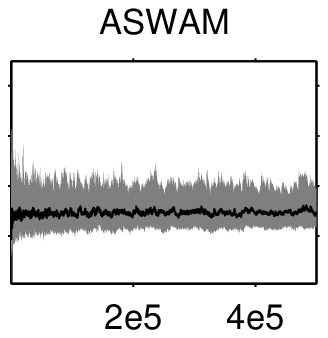}
      & \includegraphics{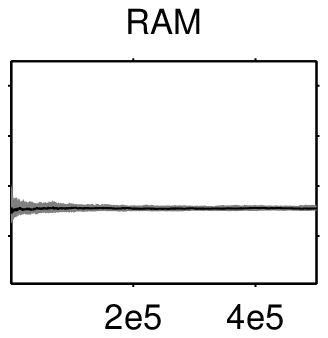}  \\[-20pt]
      \includegraphics{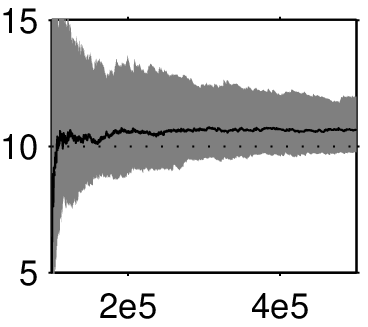}
      & \includegraphics{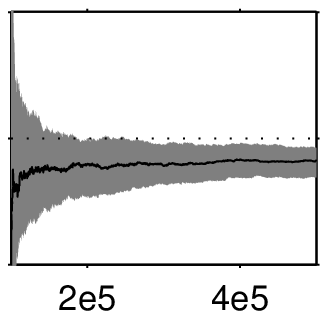}
      & \includegraphics{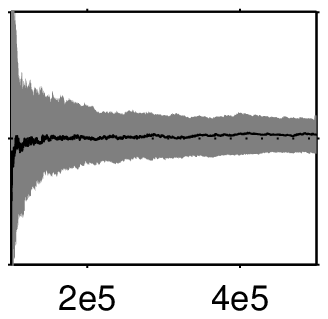} 
    \end{tabular}
    \psfragscanoff
    \vspace{-20pt}
    \caption{Bivariate Student example:
      logarithm of the first diagonal 
      component of the matrix $S_n$ (top) and
      the proportion of $X_n$ in the set $A$ after 100,000 burn-in
      iterations (bottom).} 
    \label{fig:adapt-student}
\end{figure*}
The AM covariance grows without an upper bound as expected.
When the scale adaptation is added, the ASWAM
approach manages to keep the factor $S_n = \theta_n L_n$ within
certain bounds, but there is a considerable variation that does not seem
to vanish.
This is due to the fact that $L_n$, the Cholesky factor of
$\Cov(X_1,\ldots,X_n)$, grows without an upper bound but at the same time the scaling factor 
$\theta_n$ decays to keep the acceptance rate around the desired $23.4\%$.
The RAM algorithm seems to converge nicely to a limiting value.

Such undesided behaviour of the AM and the ASWAM algorithms may also have an
effect on the validity of their simulation. Indeed, let us consider 
the 90\% highest probability density (HPD) set of the target, that is,
the set $A\defeq
\{x\in\R^2: (x-\mu)^T \Sigma^{-1} (x-\mu)^T > 99\}$. Figure
\ref{fig:adapt-student} (bottom) shows the percentage of $X_n$ outside the 
90\% HPD computed after a 100,000 sample burn-in period. The AM algorithm tends to
overestimate the ratio slightly, with more variation than the ASWAM and
the RAM approaches. The estimate produced by the ASWAM algorithm has
approximately the same variation as RAM, but there is a tendency to
underestimate the ratio. The RAM estimates are centred around the true
value.

To check how the RAM algorithm copes with higher dimensions, let us follow
\cite{roberts-rosenthal-examples} and consider
a matrix $\Sigma = M M^T$, where $M\in\R^{d\times
d}$ is randomly generated with i.i.d.~standard Gaussian elements. 
Such a matrix $\Sigma$ is used as the pseudo-covariance of a Student
distribution, so that $\pi(x) \propto (1+x^T
\Sigma^{-1} x)^{-\frac{d+1}{2}}$. 
\cite{roberts-rosenthal-scaling} showed that
in the case of Gaussian target and proposal distributions, one can measure
the `suboptimality' by the factor 
$b \defeq d \big(\sum_{i=1}^d
\lambda_i^{-2}\big)\big(\sum_{i=1}^d \lambda_i^{-1}\big)^{-2}$
where $\lambda_i$ are the eigenvalues of the matrix $(S_n
S_n^T)^{1/2}\Sigma^{-1/2}$. The factor equals one if the matrices are
proportional to each other, and is larger otherwise.
While the factor may not have the same interpretation in the present setting
involving Student distributions, it serves as a good measure of mismatch
between $S_n S_n^T$ and $\Sigma$.
Figure \ref{fig:adapt-student-b} shows the
factor $b$ in increasing dimensions each based on 100 runs of the RAM
algorithm.
\begin{figure*} 
    \psfragscanon
    \psfrag{d=10}{\small$d=10$}
    \psfrag{d=20}{\small$d=20$}
    \psfrag{d=30}{\small$d=30$}
    \psfrag{1.2}{\small 1.2}
    \psfrag{1.4}{\small 1.4}
    \psfrag{1.6}{\small 1.6}
    \psfrag{1.8}{\small 1.8}
    \psfrag{2e5}{\small$200k$}
    \psfrag{5e5}{\small$500k$}
    \psfrag{8e5}{\small$800k$}
    \begin{tabular}{rrr}
      \includegraphics{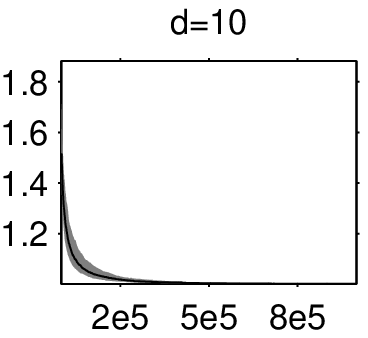}
      & \includegraphics{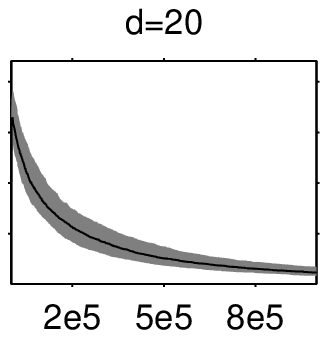}
      & \includegraphics{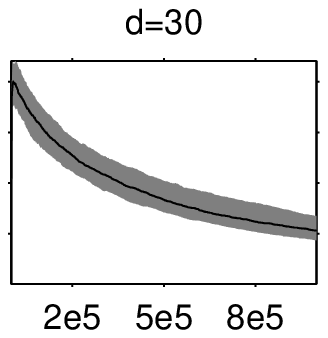}
    \end{tabular}
    \psfragscanoff
    \vspace{-20pt}
    \caption{Suboptimality factor $b$ over one million
      iterations of the RAM algorithm 
      with a different dimensional Student target.} 
    \label{fig:adapt-student-b}
\end{figure*}
The convergence of $S_n S_n^T \to \Sigma$ is slower in higher dimensions,
but the algorithm seems to find a fairly good approximation already with a
moderate number of samples.


\subsection{Gaussian distribution}
\label{sec:gaussian} 

The multivariate Gaussian target $\pi(x) = \mathcal{N}(0,\Sigma)$ serves as
a baseline comparison for the algorithms, as they should converge to the
same matrix factor\footnote{For the AM algorithm, the constant $\theta_*$ is
slightly different, but approximately equal in higher dimensions.} $S_n
S_n^T \to \theta_*\Sigma$. 

The algorithms were tested in different dimensions, for one thousand 
covariance matrices randomly generated as described in Section
\ref{sec:student}. The algorithms were always started in `steady state' so
that $X_1 \sim N(0,\Sigma)$. The algorithms were run half a million
iterations: 100,000 burn-in and 400,000 to estimate the proportions of the
samples $X_n$ in the 10\%, 25\%, 50\%, 75\% and 90\% 
HPD of the distribution. Table \ref{tab:normal-quantiles} shows the
overall root mean square error. 
For dimension two, the results are comparable.
Surprisingly, when the dimension increases the RAM approach provides more
accurate results than the AM and the AMS algorithms.

One possible explanation is that in order to approximate the sample
covariance, the covariance adaptation in AM and ASWAM should be done using
the weight sequence $\eta_n=n^{-1}$ as this corresponds almost exactly to 
the usual sample covariance estimator. This setting was tried also; 
the results appear also in Table \ref{tab:normal-quantiles}. 
It seems that using such a sequence will indeed imply better 
results, when starting from $s_1\equiv I$
or $s_1 \equiv 10^{-4}\cdot I$. 
However, when the initial factor $s_1=10^{4}\cdot I$ was `too large',
this approach failed. This is probably due to the fact that in this 
case the eigenvalues of the covariance estimate 
can decay only slowly, at the speed $n^{-1}$.

Another explanation for the unsatisfactory performance of the AM and ASWAM
approaches is that in the experiments the adaptation was started right away,
not after a burn-in phase run with a fixed proposal covariance as
suggested in the original work \citep{saksman-am}. It is
expected that the AM and the ASWAM algorithms would perform better by a
suitable fixed proposal burn-in and perhaps with yet another step size
sequences. In any case, this experiment demonstrates one strength of the 
RAM adaptation 
mechanism, namely that it does not require such a burn-in period.

\begin{table*}
    \caption{Errors in Gaussian quantiles in different dimensions.
      The step sizes $\eta_n=n^{-1}$ were used for 
      covariance estimation for 
      AM\textsuperscript{\textdagger} and 
      ASWAM\textsuperscript{\textdagger}.}
    \label{tab:normal-quantiles} 
    \scriptsize
    \setlength{\tabcolsep}{0.8ex}
    \begin{tabular}{lccccccccccccccc}
        \toprule
        & \multicolumn{5}{c}{$s_1 \equiv I$} 
        & \multicolumn{5}{c}{$s_1 \equiv 10^{-4} \cdot I$} 
        & \multicolumn{5}{c}{$s_1 \equiv 10^{4} \cdot I$} \\
        \cmidrule(lr){2-6}
        \cmidrule(lr){7-11}
        \cmidrule(lr){12-16}
        $d$ &  2 & 4 & 8 & 16 & 32 
        &  2 & 4 & 8 & 16 & 32 
        &  2 & 4 & 8 & 16 & 32 \\
        \midrule
        AM & 0.21 & 0.33 & 1.25 & 6.83 & 33.87 
           & 0.20 & 0.33 & 1.26 & 6.79 & 35.73 
           & 0.21 & 0.33 & 1.24 & 6.83 & 32.49 \\
        ASWAM & 0.22 & 0.32 & 1.23 & 6.67 & 33.78 
           & 0.21 & 0.34 & 1.25 & 6.67 & 35.77 
           & 0.21 & 0.33 & 1.23 & 6.63 & 32.11 \\
        AM\textsuperscript{\textdagger} & 0.21 & 0.27 & 0.41 & 0.70 & 1.70 
           & 0.20 & 0.28 & 0.39 & 0.55 & 2.90 
           & 6.22 & 27.54 & 53.21 & 57.69 & 58.20 \\
ASWAM\textsuperscript{\textdagger} & 0.22 & 0.36 & 0.37 & 0.53 & 1.05 
           & 0.22 & 0.28 & 0.37 & 0.53 & 3.03 
           & 0.88 & 1.94 & 3.17 & 5.34 & 8.48 \\
        RAM & 0.21 & 0.27 & 0.37 & 0.52 & 1.03 
            & 0.22 & 0.27 & 0.38 & 0.62 & 2.51 
            & 0.22 & 0.28 & 0.45 & 0.75 & 1.61 \\
        \bottomrule
    \end{tabular} 
\end{table*}


\subsection{Mixture of separate Gaussians}
\label{sec:mixture} 

The last example concerns a mixture of two Gaussians distributions in $\R^d$
with mean vectors $m_1 \defeq [4,0,\ldots,0]^T$ and $m_2 \defeq -m_1$ and
with a common diagonal covariance matrix $\Sigma \defeq \diag(1,100,\ldots,100)$. In
such a case, the mixing will be especially problematic with respect to the
first coordinate.

\begin{table*}
\caption{Errors of the expectations of the first and the other
  coordinates in the mixture example.}
\label{tab:mixture} 
    \scriptsize
    \setlength{\tabcolsep}{0.8ex}
\begin{tabular}{lccccccccccc}
        \toprule
        & \multicolumn{5}{c}{$X^{(1)}$} 
        & \multicolumn{5}{c}{$X^{(2)},\ldots,X^{(d)}$} \\
        \cmidrule(lr){2-6}
        \cmidrule(lr){7-11}
        $d$ &  2 & 4 & 8 & 16 & 32 
            &  2 & 4 & 8 & 16 & 32 \\
        \midrule
AM & 0.04 & 0.05 & 0.08 & 1.69 & 3.87 & 0.08 & 0.11 & 0.15 & 0.19 & 0.27 \\
ASWAM & 0.04 & 0.06 & 0.10 & 1.82 & 3.86 & 0.08 & 0.11 & 0.14 & 0.18 & 0.27 \\
RAM & 0.07 & 0.21 & 0.66 & 1.34 & 1.77 & 0.05 & 0.08 & 0.11 & 0.16 & 0.29 \\
        \bottomrule
\end{tabular}
\end{table*}
Table \ref{tab:mixture} shows the root mean square error of the
expectation of the first coordinate $X^{(1)}$ and
the overall error for the rest $X^{(2)},\ldots,X^{(d)}$.
The errors in the first coordinate for the RAM are
significantly higher than for the AM and the ASWAM for dimensions 2, 4 and 8.
The estimates from all the algorithms are already quite 
unreliable in dimension $16$.
For the latter coordinates, the RAM approach seems to provide 
better estimates. Observe also that when comparing ASWAM with AM, the
results are also worse in the first coordinate and better in the rest, like
in the RAM approach. This indicates that the true optimal acceptance rate is
here probably slightly less than the enforced 23.4\%.

The example shows how the RAM approach finds the `local shape' of
the distribution. In fact, it is quite easy to see what happens if the
means of the mixture components would be made further and further
apart: there would be a stable point of the RAM algorithm that would
approach the common covariance of the mixture components. Such a
behaviour of the RAM approach is certainly a weakness in certain
settings, as this example, but it can be also advantageous. 
Notice also that even such a simple multimodal setting 
poses a challenge for the random walk based approaches.



\section{Discussion}
\label{sec:discussion} 

A new robust adaptive Metropolis (RAM) algorithm was presented. The
algorithm attains a given acceptance probability, and at the same time finds
an estimate of the shape of the target distribution. The algorithm can cope
with targets having arbitrarily heavy tails unlike the AM and ASWAM
algorithms based on the covariance estimate.
The RAM algorithm has some obvious limitations. It is not suitable for strongly
multi-modal targets, but this is the case for any random walk based approach.
For sufficiently regular targets, it seems to work well and
the experiments indicate that RAM is competitive with
the AM and ASWAM algorithms also in case of light-tailed targets having
second moments.

There are several interesting directions of further research that were not
covered in the present work. The RAM algorithm can be used also within
Gibbs sampling, that is, when updating a block of coordinate variables at a
time instead of the whole vector.  This approach is often very useful
especially when the target distribution $\pi$ consists of a product of
conditional densities, which is often the case with Bayesian hierarchical
models.  In such a setting, the computational cost of evaluating the ratio
$\pi(y)/\pi(x)$ after updating one coordinate block can be significantly less
than the full evaluation of $\pi(y)$. It would also be worth investigating 
the effect of different adaptation step sizes, perhaps 
even adaptive ones as suggested by
\cite{andrieu-thoms}. 

Regarding theoretical questions, the existence and uniqueness of the
fixed points of the approach could be verified in a more general
setting; the present work only covers elliptically symmetric and
product type target densities, which are too restrictive in practice. 
The experiments indicate the overall stability of the RAM algorithm;
see also Remark \ref{rem:stable-drift}. However, proving the stability
of RAM without prior bounds is directly related to the more general
open question on the stability of adaptive MCMC algorithms, or even
more generally to the stability of stochastic approximation. Having
the stability and more general conditions on the fixed points, one
could also prove the convergence of $S_n$ in a more general setting.


\section*{Acknowledgements} 

The author was supported by the Finnish Centre of Excellence in
Analysis and Dynamics Research and by the Finnish Graduate School in
Stochastics and Statistics. The author thanks Professor Christophe
Andrieu and Professor Eric Moulines for useful discussions on the
behaviour of the algorithm.



\appendix

\section{Regularity of directional mean acceptance probability}
\label{app:convolutions} 

\begin{proposition}
\label{prop:convolutions} 
Let $\pi$ and $q$ be probability densities on $\R^d$ and on $(0,\infty)$, 
respectively, and let $v\in\R^d$ be a unit vector.
The function $g:(0,\infty)\to(0,\infty)$ defined by
\[
    g(\theta) \defeq \int_0^\infty \int_{\R^d} 
  \min\left\{\pi(x),\pi(x+r\theta v)\right\} \ud x
  q(r) \ud r
\]
is continuous, $\lim_{\theta\to\infty} g(\theta) = 0$ and $\lim_{\theta\to
  0+} g(\theta) = 1$.
\end{proposition}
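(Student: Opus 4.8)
The plan is to peel off the inner integral as a function of the displacement $w=r\theta v$ alone, establish its regularity once and for all, and then obtain the three assertions about $g$ by dominated convergence. Concretely, I would introduce the \emph{overlap function}
\[
  \phi(w) \defeq \int_{\R^d}\min\{\pi(x),\pi(x+w)\}\,\ud x,\qquad w\in\R^d,
\]
so that $g(\theta)=\int_0^\infty\phi(r\theta v)\,q(r)\,\ud r$. Since $0\le\min\{\pi(x),\pi(x+w)\}\le\pi(x)$ one has $0\le\phi\le 1$ and $\phi(0)=\int_{\R^d}\pi(x)\,\ud x=1$; writing $\min\{a,b\}=\tfrac12(a+b-|a-b|)$ also gives the identity $\phi(w)=1-\tfrac12\int_{\R^d}|\pi(x)-\pi(x+w)|\,\ud x$, which makes the remaining properties transparent.

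First I would prove that $\phi$ is (uniformly) continuous on $\R^d$. From the elementary Lipschitz bound $|\min\{a,b\}-\min\{a,c\}|\le|b-c|$,
\[
  |\phi(w)-\phi(w')|\le\int_{\R^d}|\pi(x+w)-\pi(x+w')|\,\ud x=\int_{\R^d}|\pi(y)-\pi(y+w-w')|\,\ud y,
\]
and the right-hand side tends to $0$ as $w'\to w$ by the $L^1$-continuity of translations (which follows by approximating $\pi$ in $L^1$ by continuous compactly supported functions). Next I would show $\phi(w)\to 0$ as $\|w\|\to\infty$: given $\varepsilon>0$, choose $R$ with $\int_{\|x\|>R}\pi<\varepsilon$; for $\|w\|>2R$, bound $\min\{\pi(x),\pi(x+w)\}$ by $\pi(x+w)$ on $\{\|x\|\le R\}$ and by $\pi(x)$ on $\{\|x\|>R\}$, and note that after the substitution $y=x+w$ the first region lands inside $\{\|y\|>R\}$; this yields $\phi(w)<2\varepsilon$.

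With these in hand, all three statements about $g$ follow by dominated convergence using the constant dominating function $1$, which is integrable against the probability measure $q(r)\,\ud r$. If $\theta_n\to\theta\in(0,\infty)$, then $\phi(r\theta_n v)\to\phi(r\theta v)$ for each $r$ by continuity of $\phi$, so $g(\theta_n)\to g(\theta)$ and $g$ is continuous. If $\theta\to 0+$, then $r\theta v\to 0$, hence $\phi(r\theta v)\to\phi(0)=1$ for each $r>0$, so $g(\theta)\to\int_0^\infty q(r)\,\ud r=1$. If $\theta\to\infty$, then $\|r\theta v\|=r\theta\to\infty$, hence $\phi(r\theta v)\to 0$ for each $r>0$, so $g(\theta)\to 0$. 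Finally, positivity $g(\theta)>0$ follows since $\phi$ is continuous with $\phi(0)=1$, hence $\phi>0$ on a ball $\{\|w\|<\delta\}$, so that $\phi(r\theta v)>0$ for $0<r<\delta/\theta$ and $g(\theta)\ge\int_0^{\delta/\theta}\phi(r\theta v)\,q(r)\,\ud r>0$, using that $q$ charges every neighbourhood of the origin (as do the proposal profiles of Assumption \ref{a:proposal}).

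I do not expect a genuine obstacle here: the only non-elementary ingredients are the $L^1$-continuity of translation and the tightness estimate giving $\phi(w)\to0$, both completely standard, and the remaining work is bookkeeping. The one point to be careful about is that a single dominating function must legitimise all three limit passages simultaneously, which is immediate in this case because $\phi\le 1$ and $q$ is a probability density.
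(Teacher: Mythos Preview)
Your proof is correct and, in fact, cleaner than the paper's. The paper proceeds in two stages: it first assumes $\pi$ is continuous, applies dominated convergence directly to the double integral $\int_0^\infty\int_{\R^d}\min\{1,\pi(x+r\theta v)/\pi(x)\}\pi(x)\,\ud x\,q(r)\,\ud r$, and then handles general $\pi$ by approximating it in $L^1$ by continuous probability densities, bounding the resulting error uniformly in $\theta$. Your route instead isolates the overlap function $\phi(w)=\int\min\{\pi(x),\pi(x+w)\}\,\ud x$ and proves its regularity once, for arbitrary $\pi\in L^1$, via the $L^1$-continuity of translations and a direct tightness bound; dominated convergence is then applied only to the one-dimensional outer integral. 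This avoids the case split and the approximation argument entirely, and the identity $\phi(w)=1-\tfrac12\|\pi-\tau_w\pi\|_{L^1}$ makes the connection to standard $L^1$ theory explicit.

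One small remark: your argument for strict positivity $g(\theta)>0$ needs $q$ to charge every right neighbourhood of the origin, as you note. The proposition as stated does not assume this, and indeed the conclusion can fail otherwise (take $\pi$ compactly supported and $q$ supported away from $0$). The paper's proof does not address positivity at all, so this is a gap in the statement rather than in your argument; in the paper's applications the radial profile $\hat q$ arising from Assumption~\ref{a:proposal} does have full support, so the point is harmless in context.
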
 
\begin{proof} 
Suppose first that $\pi$ is a continuous probability density on
$\R^d$. Then, write
\[
    g(\theta)
    = \int_0^\infty \int_{A} 
  \min\left\{1,\frac{\pi(x+r\theta v)}{\pi(x)}\right\} \pi(x) \ud x
  q(r) \ud r
\]
where $A\defeq\{x\in\R^d:\pi(x)>0\}$ stands for the support of $\pi$.
Let $(\theta_n)_{n\ge 1}\subset(0,\infty)$ be any sequence and define
$f_{\theta}(x,r) \defeq \min\big\{1,\frac{\pi(x+r\theta v)}{\pi(x)}\big\}$.
Clearly, whenever $\theta_n$ converges to some $\theta$, then
$f_{\theta_n}(x,r)\to f_{\theta}(x,r)$ 
pointwise on $A\times(0,\infty)$ by the continuity of $\pi$. Since $|f_n(x,r)|\le
1$, the dominated convergence theorem yields that
$|g(\theta_n)-g(\theta)|\to 0$, establishing the continuity.
For any sequence $\theta_n\to 0+$ one clearly has $f_{\theta_n}(x,r)\to 1$,
and for any sequence $\theta_n\to\infty$ one obtains 
$f_{\theta_n}(x,r)\to 0$, establishing the claim.

Let us then proceed to the general case. Let $\epsilon>0$ be arbitrary.
We shall show that there exists a continuous 
probability density $\tilde{\pi}$ 
on $\R^d$ such that 
\[
    \int_{\R^d} |\tilde{\pi}(x) - \pi(x) | \ud x <\epsilon.
\]
Having such $\tilde{\pi}$, one can bound the difference 
\begin{equation*}
    \left|g(\theta) - 
    \int_0^\infty \int_{A} 
  \min\left\{1,\frac{\hat{\pi}(x+r\theta v)}{\hat{\pi}(x)}\right\} \hat{\pi}(x) \ud x
  q(r) \ud r\right|
  \le \int_{\R^d} |\pi(x) - \tilde{\pi}(x)| \ud x <\epsilon
\end{equation*}
establishing the claim.

Let us finally verify that such a continuous probability density 
$\tilde{\pi}$ exists. Approximate $\pi$ first by smooth non-negative 
functions $\pi_n$ such that $\int_{\R^d} |\pi(x) - \pi_n(x)| \ud
x\to 0$, and then normalise them to probability densities
$\tilde{\pi}_n(x) \defeq c_n \pi_n(x)$.
Clearly, the constants $c_n \defeq (\int_{\R^d} \pi_n(z) \ud z)^{-1}\to
1$, and so $\int_{\R^d} |\pi(x) - \tilde{\pi}_n(x)| \ud x
\le \int_{\R^d} |\pi(x) - \pi_n(x)| \ud x + |1-c_n|\to 0$.
\end{proof} 


\section{Proofs of convergence}
\label{sec:slln-proof} 

\begin{proof}[Theorem \ref{th:slln}] 
Let $0<a\le b<\infty$ be arbitrary constants
and denote by $\spc{S}_{a,b}\subset\R^{d\times d}$ the set of
all lower triangular matrices with positive diagonal, such that the
eigenvalues of $s s^T$ are within $[a,b]$.
Let $P_s$ stand for the random walk Metropolis kernel with a proposal
density $q_s(z) \defeq \det(s)^{-1} q(s^{-1} z)$, that is,
for any $x\in\R^d$ and any Borel set $A\subset\R^d$
\begin{align*}
    P_s(x,A) \defeq \charfun{A}(x)
    &\left(1-\int_{\R^d}
        \min\left\{1,\frac{\pi(y)}{\pi(x)}
          \right\}
        q_s(y-x)\ud y\right)\\
    &+ \int_A \min\left\{1,\frac{\pi(y)}{\pi(x)}
          \right\} q_s(y-x) \ud y.
\end{align*}
We shall use the notation $P_s f(x) \defeq \int_{\R^d} f(y) P_s(x,\ud y)$
to denote the integration of a function with respect to the kernel
$P_s$.

Let us check that the following assumptions are satisfied.
\begin{enumerate}[({A}1)]
  \item \label{a:invar} 
    For all possible $s\in\spc{S}_{a,b}$, the kernels $P_s$ have a unique
    invariant probability distribution $\pi$ for which 
    $\int_{\R^d} P(x,A) \pi(\ud x) = \pi(A)$ for any Borel set
    $A\subset\R^d$.
  \item \label{a:drift} 
    There exist a Borel set $C\subset\R^d$,
    a function $V:\R^d\to[1,\infty)$, 
    constants $\delta,\lambda\in(0,1)$ and $b<\infty$,
    and a probability measure $\nu$ concentrated on $C$
    such that
  \begin{eqnarray*}
    P_s V(x) &\le & \lambda V(x) + \charfun{C}(x)b 
    \qquad\text{and}
    \\
    P_s(x,A) &\ge& \charfun{C}(x) \delta \nu(A)
  \end{eqnarray*}
  for all possible $x\in\R^d$, $s\in\spc{S}_{a,b}$ and all Borel sets $A\subset\R^d$.
  \item \label{a:kernel-lip} 
  For all $n\ge 1$ and any $r\in(0,1]$, there is a constant 
  $c'=c'(r)\ge 1$ such that
  for all $s,s'\in\spc{S}_{a,b}$,
  \begin{equation*}
      \sup_{x\in\R^d} \frac{|P_s f(x) - P_{s'} f(x) |}{V^r(x)}
      \le c' | s - s'| \sup_{x\in\R^d} \frac{|f(x)|}{V^r(x)}.
  \end{equation*}
  \item \label{a:adapt-bound} 
        There is a constant $c<\infty$ such that 
        for all $n\ge 1$,  $s\in \spc{S}_{a,b}$, $x\in\R^d$ and $u\in\R^{d}$
        the bound $|H(s,x,u)| \le c$ holds. 
\end{enumerate}
The uniqueness of the invariant distribution (A\ref{a:invar}) follows
by observing that the kernels $P_s$ are irreducible, aperiodic and
reversible with respect to $\pi$ \citep[see, e.g.][]{nummelin-mcmcs}. 
The simultaneous drift and minorisation condition 
(A\ref{a:drift}) and the continuity condition 
was established by \cite{andrieu-moulines}.
The continuity condition (A\ref{a:kernel-lip})
was established by \cite{andrieu-moulines} for Gaussian proposal
distributions and was extended to cover the Student proposal in
\citep{vihola-asm}. The bound (A\ref{a:adapt-bound}) is easy to verify.

Assumption \ref{a:stability} ensures that for any $\epsilon>0$ there exist
constants $0<a_\epsilon\le b_\epsilon<\infty$ such that all the eigenvalues
of $S_n S_n^T$ stay within the interval $[a_\epsilon,b_\epsilon]$ at least
with probability $\P(\Omega_0)-\epsilon$. This is enough to ensure that the strong law of
large numbers holds by \citep[Proposition 6]{andrieu-moulines}. For
details, see also \citep[Theorem 2]{saksman-vihola} and 
\citep[Theorem 20]{vihola-asm}.
\end{proof} 

\begin{proof}[Theorem \ref{th:convergence}]
The proof follows by \citep[Theorem 5]{andrieu-moulines-volkov} by
using a similar technique as in the proof of Theorem \ref{th:slln}.
Consider the Lyapunov function $w_\pi(R)$ defined in 
Theorem \ref{th:lyapunov}.
It is straightforward to
verify items 1--4 of \citep[Condition 1]{andrieu-moulines-volkov}
when we take $\Theta$ to be the space of symmetric positive definite
matrices and consider $S_nS_n^T\in\Theta$.
The compact sets are of the form 
$K=\{s s^T:s\in\spc{S}_{a_\epsilon,b_\epsilon}\}$ with
$a_\epsilon,b_\epsilon$ as in the proof of 
Theorem \ref{th:slln}.
Item 5 follows by invoking \citep[Proposition
6]{saksman-vihola} with $f_{\theta}\big((x,u)\big) = H(\theta,x,u)$.
\end{proof}


\end{document}